\title{Confluence of an extension of Combinatory Logic by Boolean
  constants\footnote{Supported by Marie Sk{\l}odowska-Curie action ``InfTy'',
    program H2020-MSCA-IF-2015, number~704111, and by Narodowe Centrum
    Nauki grant 2012/07/N/ST6/03398.}}
\author[1]{{\L}ukasz Czajka}
\affil[1]{DIKU, University of Copenhagen, Copenhagen, Denmark\\
  \texttt{luta@di.ku.dk}}
\authorrunning{{\L}. Czajka} 
\subjclass{F.4.2 Grammars and Other Rewriting Systems}
\keywords{combinatory logic, conditional linearization, unique normal
  form property, confluence}
\newcommand{\Ps}{\mathsf{P}}
\newcommand{\Cs}{\mathsf{C}}
\newcommand{\Fs}{\mathsf{F}}
\newcommand{\Ts}{\mathsf{T}}
\newcommand{\Ss}{\mathsf{S}}
\newcommand{\Ks}{\mathsf{K}}
\newcommand{\Is}{\mathsf{I}}
\newcommand{\from}{\ensuremath{\leftarrow}}
\newcommand{\From}{\ensuremath{\Leftarrow}}
\newcommand{\Da}{\ensuremath{\Downarrow}}
\newcommand{\reduces}{\ensuremath{\to^*}}
\newcommand{\contr}{\ensuremath{\rightarrow}}
\newcommand{\eqv}{\ensuremath{\leftrightarrow}}
\newcommand{\erase}[1]{\ensuremath{|#1|}}
\newcommand{\Var}{\mathrm{Var}}
\newcommand{\Nbb}{\ensuremath{\mathbb{N}}}
\newcommand{\la}{\ensuremath{\langle}}
\newcommand{\ra}{\ensuremath{\rangle}}
\newcommand{\leftidx}[3]{{\;\vphantom{#2}}#1\!\!#2#3}
\newcommand{\Sc}{\ensuremath{{\mathcal S}}}
\newcommand{\CLC}{{\mbox{${\text{\rm CL-pc}^1}$}}}
\newcommand{\CLCz}{{\mbox{$\text{\rm CL-pc}$}}}
\newcommand{\CLCp}{{\mbox{$\text{\rm CL-pc}^{\mathrm{L}}$}}}
\newcommand{\sCLC}{{\ensuremath{\text{\rm CL-pc}^1}}}
\newcommand{\sCLCz}{\ensuremath{\text{\rm CL-pc}}}
\newcommand{\sCLCp}{\ensuremath{\text{\rm CL-pc}^{\mathrm{L}}}}
\newcommand{\CLCs}{{\mbox{${\text{\rm CL-pc}^s}$}}}
\newcommand{\equiverased}{\succ}
\begin{document}

\maketitle

\begin{abstract}
  We show confluence of a conditional term rewriting system~$\CLC$,
  which is an extension of Combinatory Logic by Boolean
  constants. This solves problem~15 from the~RTA list of open
  problems. The proof has been fully formalized in the~Coq proof
  assistant.
\end{abstract}

\section{Introduction}

Combinatory Logic is a term rewriting system defined by two rules:
\[
\begin{array}{rclcrcl}
  \Ks x y &\to& x &\quad\quad& \Ss x y z &\to& x z (y z)
\end{array}
\]
Using only~$\Ss$ and~$\Ks$, it is possible to encode natural numbers via
Church numerals. Any computable function may then be represented by a
term in the system. However, a conditional~$\Cs$ encoded in this way
does not have a desirable property that $\Cs t_1 t_2 t_2 = t_2$ if $t_1$
encodes neither true nor false. It is therefore interesting to
investigate extensions of Combinatory Logic incorporating a
conditional directly. Perhaps the most natural such extension
is~$\CLCz$:
\[
\begin{array}{rclcrclcrcl}
  \Ks x y &\to& x &\quad& \Cs \Ts x y &\to& x &\quad& \Cs z x x &\to& x\\
  \Ss x y z &\to& x z (y z) &\quad& \Cs \Fs x y &\to& y &&&&
\end{array}
\]
The system~$\CLCz$ is known to be not confluent~\cite{Klop1980}. One
may thus try other ways of adding a conditional and Boolean constants
to Combinatory Logic.

We show confluence of a conditional term rewriting system~$\CLC$
defined by the rules:
\[
\begin{array}{rclcrclcrcl}
  \Ks x y &\to& x &\quad& \Cs \Ts x y &\to& x &\quad& \Cs z x y &\to& x
  \quad\From\quad x = y\\
  \Ss x y z &\to& x z (y z) &\quad& \Cs \Fs x y &\to& y &&&&
\end{array}
\]
Confluence of this system\footnote{Strictly speaking, in the
  literature the systems~$\CLCz$, $\CLC$ and~$\CLCp$ also contain the
  rule $\Is x \to x$. This rule could be added to our definitions
  without significantly changing the proofs. However, this would
  increase the number of cases to consider, making the proofs less
  readable. The formalization of our results uses the definitions from
  the literature.} appears as problem~15 on the RTA list of open
problems~\cite{DershowitzJouannaudKlop1991}.

The equality in the side condition for the third rule for~$\Cs$
in~$\CLC$ refers to equality in the system~$\CLC$ itself, thus the
definition is circular. This circularity is an essential property
of~$\CLC$ which distinguishes it from~$\CLCz$.

A system related to~$\CLC$ is~$\CLCp$, which consists of all rules
of~$\CLC$ plus:
\begin{eqnarray*}
  \Cs z x y \to y &\From& x = y
\end{eqnarray*}
It is known that~$\CLCp$ is confluent~\cite{Vrijer1999}. However, the
confluence proof in~\cite{Vrijer1999} essentially depends on a
``semantic'' argument to first establish $\Ts \ne_{\sCLCp} \Fs$. We
provide a ``syntactic'' proof of confluence of both~$\CLC$
and~$\CLCp$.

The systems~$\CLC$ and~$\CLCp$ are conditional linearizations
of~$\CLCz$. The notion of conditional linearization was introduced in
the hope of providing a simpler proof of Chew's
theorem~\cite{Chew1981,ManoOgawa2001} which states that all compatible
term rewriting systems have the unique normal form (UN)
property. Compatibility imposes certain restrictions on the term
rewriting system, but it does not require termination or
left-linearity. In particular, Chew's theorem is applicable to many
term rewriting systems which are not confluent. For instance,~$\CLCz$
satisfies the conditions of Chew's theorem, but it is not
confluent. As shown in~\cite{Vrijer1999}, to prove the unique normal
form property of a term rewriting system, it suffices to prove
confluence of one of its conditional linearizations. The proof of
Chew's theorem in~\cite{ManoOgawa2001} is quite complicated and uses a
related but different approach, relying on left-right separated
conditional linearizations instead of the more straightforward ones
from~\cite{Vrijer1999}. The original proof by Chew~\cite{Chew1981}
uses yet another different but related method, but Chew's proof was
later found to contain a gap.

In general, the methods of the present paper are broadly related to
the problem of establishing the~UN property for classes of term
rewriting systems which include non-left-linear non-confluent
systems. Aside of Chew's theorem, some other work in this direction
has been carried out in
e.g.~\cite{KlopVrijer1989,ToyamaOyamaguchi1995,Verma1997,Stovring2006,KahrsSmith2016}.

In order to increase confidence in the correctness of the main result
of this paper, we have formalized our proof of confluence of~$\CLC$ in
the~Coq proof assistant. The formalization is available
online\footnote{\url{http://www.mimuw.edu.pl/~lukaszcz/clc.tar.gz}}. It
follows closely the development presented here. We used the
CoqHammer~\cite{CzajkaKaliszyk2017Submitted} tool and the automated
reasoning tactics included with it.

\section{Proof overview}\label{sec_overview}

In this section we present an informal overview of the proof, trying
to convey the underlying intuitions. Section~\ref{sec_definitions}
presents formal definitions of the notions informally motivated here,
and Section~\ref{sec_proof} provides details of the proof itself.

We assume familiarity with basic
term-rewriting~\cite{BaaderNipkow1999,Terese2003}. By~$\to^*$ we
denote the transitive-reflexive closure of a relation~$\to$,
by~$\to^\equiv$ its reflexive closure, by~$\leftrightarrow$ the
symmetric closure, and by~$=$ the reflexive-transitive-symmetric
closure. We use~$\equiv$ to denote identity of terms. By~$\to^!$ we
denote reduction to normal form, i.e., $t \to^! s$ if $t \reduces s$
and~$s$ is in normal form. By~$\cdot$ we denote composition of
relations, e.g.~$t \to \cdot \from t'$ holds iff there exists~$t_0$
such that $t \to t_0$ and $t_0 \from t'$. We use the standard notions
of subterms and subterm occurrences, which could be formally defined
by introducing the notion of positions. If~$t$ is a redex,
i.e.~$t \equiv \sigma l$ for some term~$l$ and substitution~$\sigma$,
then a subterm~$s$ \emph{occurs below a variable position} of the
redex~$t$ if~$s$ occurs in a subterm of~$t$ occurring at the position
of a variable in~$l$. The contraction in $t_1 \to t_2$ \emph{occurs at
  the root} if~$t_1$ is the contracted redex.

Let~$u$ be a normal form w.r.t.~a relation~$\to$. The relation~$\to$
(or the underlying rewrite system) is \emph{$u$-normal} if for
every~$t$ such that $t = u$ we have $t \reduces u$.

The most difficult part of our confluence proof is to show that~$\CLC$
is $\Fs$-normal (Lemma~\ref{lem_f_nf}). The confluence of~$\CLC$
(and~$\CLCp$) is then obtained by a relatively simple argument similar
to the one used in~\cite{Vrijer1999} to derive the confluence
of~$\CLCp$ from~$\Ts \ne_{\sCLCz} \Fs$.

An important observation is that $q_1 =_{\sCLC} q_2$ and
$q_1 =_{\sCLCz} q_2$ are in fact equivalent
(Lemma~\ref{lem_clc_equivalent}). Hence, we will use~${=_{\sCLC}}$
and~${=_{\sCLCz}}$ interchangeably. In particular, we actually prove
that for any term~$q$, if $q =_{\sCLCz} \Fs$ then
$q \reduces_{\sCLC} \Fs$.

A naive approach to prove this could be to proceed by induction on the
length of the conversion $q =_{\sCLCz} \Fs$. In the inductive step we
would need to prove:
\begin{enumerate}
\item if $q \reduces_\sCLC \Fs$ and $q \to_{\sCLCz} q'$ then
  $q' \reduces_\sCLC \Fs$,
\item if $q \reduces_\sCLC \Fs$ and
  $q \leftidx{{}_{\sCLCz}}{\from} q'$ then $q' \reduces_\sCLC \Fs$.
\end{enumerate}
The second part is obvious, but the first one is hard. The difficulty
stems from the existence of a non-trivial overlap between the rules
for~$\Cs$. If $t_1 =_\sCLC t_2$ then $\Cs \Fs t_1 t_2 \to_{\sCLC} t_1$
by the third rule of~$\CLC$ and $\Cs \Fs t_1 t_2 \to_{\sCLCz} t_2$ by
the second rule of~$\CLCz$. We do not know enough about~$t_1$
and~$t_2$ to easily infer that they have a common reduct in~$\CLC$.

One may try to strengthen the inductive hypothesis in the hope of
making the first part easier to prove. A naive attempt would be to
claim that \emph{all} reductions starting from~$q$ end in~$\Fs$,
instead of claiming that \emph{some} reduction ends in~$\Fs$. This
would make the first part trivial, but the second one would not go
through as this is false in general, e.g., consider $\Ks \Fs \Omega$
where $\Omega \equiv (\Ss \Is \Is) (\Ss \Is \Is)$ and
$\Is \equiv \Ss \Ks \Ks$.

The idea is to consider, for a given conversion $q =_{\sCLCz} \Fs$, a
certain set $\Sc(q =_{\sCLCz} \Fs)$ of reductions, all starting
from~$q$. The set $\Sc(q =_{\sCLCz} \Fs)$ depends on the exact form of
$q =_{\sCLCz} \Fs$. Then our two parts of the proof for the inductive
step become:
\begin{enumerate}
\item if $\Sc(q =_{\sCLCz} \Fs)$ is nonempty and all reductions in it
  end in~$\Fs$, and $q \to_\sCLC q'$, then
  $\Sc(q' \leftidx{{}_{\sCLCz}}{\from} q =_\sCLCz \Fs)$ is nonempty
  and all reductions in it end in~$\Fs$,
\item if $\Sc(q =_{\sCLCz} \Fs)$ is nonempty and all reductions in it
  end in~$\Fs$, and $q \leftidx{{}_{\sCLCz}}{\from} q'$, then
  $\Sc(q' \to_{\sCLCz} q =_\sCLCz \Fs)$ is nonempty and all reductions
  in it end in~$\Fs$.
\end{enumerate}
The hope is that if we define $\Sc(q =_{\sCLCz} \Fs)$ appropriately,
then showing both parts will become feasible.

Essentially, the set $\Sc(q =_\sCLCz \Fs)$ will be encoded in the
labeling of certain constants in~$q$. The labels determine which
contractions are permitted when a given constant appears as the
leftmost constant in a redex\footnote{E.g.~in the redex
  $\Cs \Ts t_1 t_2$ the constant~$C$ is the leftmost constant.}. At
present the author does not know an ``explicit'' characterization of
the set of reductions $\Sc(q =_\sCLCz \Fs)$ implicitly defined by the
labelings described below.

Terms with the leftmost constant labeled will be called
``significant'', or $s$-terms, whereas others will not contain any
labels and will be called ``insignificant'', or $i$-terms
(c.f.~Definition~\ref{def_terms}). Reductions occurring in $i$-terms
will be ``insignificant'', or $i$-reductions. A ``significant''
contraction, or $s$-contraction, will be a contraction of a term with
the leftmost constant labeled, in a way permitted by the label of the
leftmost constant. Contraction of a redex in which the leftmost
constant is not labeled is not permitted in $s$-contractions. See
Definition~\ref{def_clc_s}. The intuition is that we do not need to
care about the expansions and contractions occurring in
``insignificant'' subterms of a given term, since they cannot
influence the $s$-reductions starting from this term and ending
in~$\Fs$.

The set $\Sc(q =_{\sCLCz} \Fs)$ will be encoded in a labeled
variant\footnote{By a ``labeled variant'' of a term~$q$ we mean a term
  with certain constants labeled which is identical with~$q$ when the
  labels are ``erased''.}~$t$ of~$q$, and it will consist of all
$s$-reductions starting from~$t$ and ending in a normal form
(w.r.t.~$s$-contraction). Strictly speaking, we have just silently
shifted from considering contractions in ``plain'' terms of the
system~$\CLC$ to contractions in their labeled variants, in a
different rewriting system which we have not yet defined. In
particular, we will actually be interested in $s$-reductions ending in
a labeled variant~$\Fs_1$ of~$\Fs$. However, it will be later shown
that $s$-reductions defined on labeled terms may be ``erased'' to
appropriate reductions in the system~$\CLC$. In the next section we
define the system~$\CLCs$ (Definition~\ref{def_clc_s}) over labeled
terms (Definition~\ref{def_terms}) which will give precise rules of
$s$-contraction. In this section we only give informal motivations.

The labels constrain the ways in which $s$-redexes may be contracted
and encode permissible $s$-reductions to~$\Fs_1$. Each term decomposes
into a ``significant'' prefix and an ``insignificant'' suffix
(c.f.~\ref{std_i_or_s_or_tuple} in Definition~\ref{def_standard}). The
``significant'' prefix contains all labeled constants and no unlabeled
constants. The ``insignificant'' suffix consists of all
``insignificant'' subterms. All constants in the ``insignificant''
suffix are unlabeled. This is analogous to the existence of a needed
prefix and a non-needed suffix in orthogonal
TRSs~\cite[Section~9.2.2]{Terese2003}. An ``insignificant'' subterm
does not overlap with any needed redexes. In particular, it does not
contain any needed redexes. No position inside an ``insignificant''
subterm (dynamically) traces to~$\Fs_1$ along any $s$-reduction
to~$\Fs_1$ (c.f.~\cite[Definition~8.6.7]{Terese2003}). In contrast,
each $s$-redex needs to be either $s$-contracted or erased by a rule
for~$\Cs_2$ (see Definition~\ref{def_clc_s}) in any $s$-reduction
to~$\Fs_1$. Each position of a labeled constant either traces
to~$\Fs_1$ along a given $s$-reduction to~$\Fs_1$, or is erased in
that $s$-reduction by a rule for~$\Cs_2$. An $s$-reduct of an $s$-term
is always also an $s$-term (c.f.~\ref{std_s_term_reduce} in
Definition~\ref{def_standard}).

We write $t \to_s t'$ for one-step reduction in~$\CLCs$. We use the
abbreviation $s$-NF for $\CLCs$-normal form. We write $t \Da_{\Fs_1}$
when, among other conditions to be defined later, $t$ is complete,
i.e.~terminating and confluent, w.r.t.~$s$-reductions with~$\Fs_1$ as
the normal form (c.f.~Definition~\ref{def_standard}).

With the set $\Sc(q =_{\sCLCz} \Fs)$ coded by labels, the two parts of
the inductive step become:
\begin{enumerate}
\item if~$t$ is a labeled variant of~$q$ such that $t \Da_{\Fs_1}$,
  and $q \to_{\sCLCz} q'$, then there exists a labeled variant~$t'$
  of~$q'$ such that $t' \Da_{\Fs_1}$ (c.f.~Corollary~\ref{cor_contr}),
\item if~$t$ is a labeled variant of~$q$ such that $t \Da_{\Fs_1}$,
  and $q \leftidx{{}_{\sCLCz}}{\from} q'$, then there exists a labeled
  variant~$t'$ of~$q'$ such that $t' \Da_{\Fs_1}$
  (c.f.~Corollary~\ref{cor_expand}).
\end{enumerate}

Now we provide some explanations on how the terms will be labeled. For
this purpose we analyze why the second part fails when we take
$\Sc(q =_{\sCLCz} F)$ to be the set of all reductions starting
from~$q$. We indicate how to introduce the labeled variants so as to
make the second part go through while still retaining the feasibility
of showing the first part.

Suppose $q \leftidx{{}_{\sCLCz}}{\from} q'$ at the root and we have
already decided on the labeled variant~$t$ of~$q$. We need to decide
on a labeled variant~$t'$ of~$q'$, and assign appropriate meaning to
the labels, in such a way that the second part goes through. In short,
in~$t'$ we preserve the labelings of the subterms of~$q'$ which are
copied to~$q$ in $q' \to_{\sCLCz} q$, we do not label the new subterms
of~$q'$ which are erased in $q' \to_{\sCLCz} q$ (they become
$i$-terms), and we ensure that $i$-terms and cannot influence
\emph{any} $s$-reduction from~$t'$ to~$\Fs_1$. First of all, if~$t$ is
an $i$-term, i.e., $t \equiv q$, then we may take $t' \equiv q'$. So
assume~$t$ is an $s$-term. Then there are the following possibilities.
\begin{itemize}
\item If $q' \equiv \Cs \Ts q q_0 \to_{\sCLCz} q$ then~$q_0$ is a new
  subterm. We take $t' \equiv \Cs_1 \Ts_1 t q_0$. The labeling~$\Cs_1$
  of~$\Cs$ will be interpreted as not permitting contraction by the
  third rule, i.e., in~$\CLCs$ we will only have the rules
  $\Cs_1 \Ts_1 x y \to x$ and $\Cs_1 \Fs_1 x y \to y$. This ensures
  that~$q_0$ gets erased in every $s$-reduction of~$t'$ to~$\Fs_1$.
\item The case when $q' \equiv \Cs \Fs q_0 q \to_{\sCLCz} q$ is
  analogous: we take $t' \equiv \Cs_1 \Fs_1 q_0 t$.
\item If $q' \equiv \Cs q_0 q q \to_{\sCLCz} q$ by the third rule,
  then~$q_0$ is a new term. We take $t' \equiv \Cs_2 q_0 t t$. In the
  system~$\CLCs$ we have two rules for~$\Cs_2$
  \[
  \begin{array}{rcl}
    \Cs_2 z x y &\to& x \quad\From\quad \erase{x} =_{\sCLC} \erase{y} \\
    \Cs_2 z x y &\to& y \quad\From\quad \erase{x} =_{\sCLC} \erase{y}
  \end{array}
  \]
  where $\erase{x} =_\sCLC \erase{y}$ means that the ``erasures'' of
  the labeled terms substituted for~$x$ and~$y$ must be equal
  in~$\CLC$ for the rule to be applicable. These rules ensure
  that~$q_0$ cannot influence any $s$-reduction of~$t'$ to~$\Fs_1$ --
  it gets erased in each.

  The presence of the second rule for~$\CLCs$ is not a problem,
  because we will only consider terms terminating in~$\CLCs$. Whenever
  the second rule is applicable, so is the first one, hence if all
  maximal $s$-reductions end in~$\Fs_1$, then there is an
  $s$-reduction ending in~$\Fs_1$ which does not use the second rule
  for~$\Cs_2$ (Lemma~\ref{lem_good_reduction}). It will be easy to
  ``erase'' an $s$-reduction not using the second rule for~$\Cs_2$ to
  obtain a reduction in~$\CLC$ (Lemma~\ref{lem_s_minus_erase}).
\item If $q' \equiv \Ks q q_0 \to_\sCLC q$ then we take
  $t' \equiv \Ks_1 t q_0$. The rule for~$\Ks_1$ in~$\CLCs$ is
  $\Ks_1 x y \to x$.
\item If
  $q' \equiv \Ss q_1 q_2 q_3 \to_\sCLC q_1 q_3 (q_2 q_3) \equiv q$
  then we run into a problem with our labeling approach, because the
  labeled variants of the distinct occurrences of~$q_3$ may be
  distinct. Suppose~$t_1$ is the labeled variant of~$q_1$, the
  term~$t_2$ of~$q_2$, the term~$t_3$ of the first~$q_3$, and~$t_3'$
  of the second~$q_3$. We cannot just arbitrarily choose e.g.~$t_3$
  and say that $\Ss_1 t_1 t_2 t_3$ is the labeled variant of~$q'$,
  because contracting $\Ss_1 t_1 t_2 t_3$ yields $t_1 t_3 (t_2 t_3)$,
  not $t_1 t_3 (t_2 t_3')$, and now the second occurrence of~$q_3$ has
  the wrong labeling.

  A solution is to remember both labeled variants of~$q_3$. So the
  labeled variant of~$q'$ would be
  e.g.~$\Ss_1 t_1 t_2 \la t_3, t_3' \ra$. In~$\CLCs$ the rule
  for~$\Ss_1$ would be
  \[
    \Ss_1 x_1 x_2 \la x_3, x_3' \ra \to x_1 x_3 (x_2 x_3').
  \]
  However, once we introduce such pairs, terms of the form
  $\Ss_1 t_1 t_2 \la t_3, t_3' \ra$ may appear in the terms being
  expanded. This is not a problem for any of the rules of~$\CLC$
  except the rule for~$\Ss$, because the right sides of all other
  rules are variables.

  Consider for instance
  $q \equiv q_0 q_3 (\Ss q_1 q_2 q_3) \leftidx{{}_\sCLC}{\from} \Ss
  q_0 (\Ss q_1 q_2) q_3 \equiv q'$. Suppose
  $t \equiv t_0 t_3 (\Ss_1 t_1 t_2 \la t_3, t_3' \ra)$. Now the
  term~$q_3$ has three possibly distinct labeled variants, and we need
  to remember all of them in a tuple. We will thus introduce a new
  labeling of~$\Ss$ for every possible labeling of the right
  side~$x z (y z)$ of the rule for~$\Ss$ in the system~$\CLC$.

  By introducing the tuples in the labelings we in essence put
  constraints on the order in which $s$-redexes may be contracted
  (think of all reductions inside a tuple as ``really'' occuring after
  the surrounding $\Ss$-redex is contracted). At present the author
  does not know a precise ``explicit'' characterization of these
  constraints.
\end{itemize}
Note that by labeling $\Cs$ differently in $\Cs \Fs q_1 q_2$ and
$\Cs q_0 q q$ we effectively eliminated in~$\CLCs$ the problematic
non-trivial overlap occuring in~$\CLC$. Now a new ``insignificant''
term created in an expansion cannot later on appear in place of a
``significant'' term as a result of an ``incompatible'' contraction. A
redex inside an ``insignificant'' subterm cannot suddenly become
needed in an $s$-reduction -- it is erased in any $s$-reduction to
normal form.

We also need to ensure that we can handle the first part of the
inductive step when $q \to_{\sCLCz} q'$. Suppose~$t$ is the labeled
variant of~$q$. We need to find a labeled variant for~$q'$. For
simplicity assume that there is only one position in~$t$ which
corresponds to the position of the contraction in~$q$. If the
contraction occurs inside an $i$-term in~$t$, then it does not matter
and we may label~$q'$ in the same way as~$q$. If an $s$-term is
contracted in a way permitted for significant contraction, then it is
also obvious how to label~$q'$ -- just take the labeled variant
of~$q'$ to be the reduct of the labeled variant of~$q$. But what if
neither of the two holds?

For instance, what if $t \equiv \Cs_1 t_0 t_1 t_2$ but
$q \equiv \Cs q_0 q' q' \to_\sCLC q'$? This possibility is not
problematic, provided that~$t_0 \to_s^* \Ts_1$ or $t_0 \to_s^* \Fs_1$,
which will be the case because~$t_0$ was ``obtained'' from~$\Ts_1$
or~$\Fs_1$ by a conversion with the intermediate terms labeled
appropriately (c.f.~\ref{std_c_1} in Definition~\ref{def_standard}
and~\ref{prop_std_c_1} in Lemma~\ref{lem_standard_properties}). If
e.g.~$t_0 \to_s^* \Ts_1$ then we take~$t_1$ to be the labeling
of~$q'$. We then have $\Cs_1 t_0 t_1 t_2 \to_s^* \Cs_1 \Ts_1 t_1 t_2$
and the contraction $\Cs_1 \Ts_1 t_1 t_2 \to_s t_1$ is permitted for
``significant'' contractions.

The last problematic case is when e.g.~$t \equiv \Cs_2 \Fs t_1 t_2$ is
the labeling of~$q \equiv \Cs \Fs q_1 q'$, and $q \to_{\sCLCz} q'$ by
the second rule. However, because $\Cs_2 \Fs t_1 t_2$ was ``obtained''
from $\Cs_2 q t' t'$ we will have $\erase{t_1} =_\sCLC \erase{t_2}$
(c.f.~\ref{std_c_2} in Definition~\ref{def_standard}). Then the second
rule for~$\Cs_2$ in~$\CLCs$ is applicable and we may take~$t_2$ as the
labeling of~$q'$.

\section{Definitions}\label{sec_definitions}

This section is devoted to fixing notation and introducing definitions
of various technical concepts. First, we clarify the formal definition
of conditional term rewriting systems. For more background on
conditional rewriting see e.g.~\cite{Terese2003}.

\begin{definition}\label{def_trs}
  A \emph{conditional rewrite rule} is a rule of the form
  $l \to r \From P(x_1,\ldots,x_n)$, where~$l$ is not a variable,
  $\Var(r) \subseteq \Var(l)$, $x_1,\ldots,x_n\in\Var(l)$,
  and~$P(x_1,\ldots,x_n)$ is the \emph{condition} of the rule,
  with~$P$ a fixed predicate on terms. The predicate~$P$ may refer to
  the conversion relation~$=$ of the conditional term rewriting system
  being defined. A term~$t$ is a \emph{redex} (\emph{contractum}) by
  this rule if there is a substitution~$\sigma$ such that
  $t \equiv \sigma l$ ($t \equiv \sigma r$) and
  $P(\sigma(x_1),\ldots,\sigma(x_n))$ holds. A \emph{conditional term
    rewriting system}~$R$ is a set of conditional rewrite
  rules. Because the conditions in the rules may refer to the
  conversion relation of~$R$, the definition is circular. Formally, an
  $R$-contraction $q \contr_R q'$ is defined in the following
  way. Define~$R_0$ to be the system~$R$ but using the equality
  relation in place of~$=$ in the conditions, and~$R_{n+1}$ to be the
  system~$R$ with the conversion relation~$=_{R_n}$ of~$R_n$ used in
  place of~$=$. We then define $q \contr_R q'$ to hold if there
  is~$n \in \Nbb$ with $q \contr_{R_n} q'$. The least such~$n$ is
  called the \emph{level} of the contraction. If the conditions are
  continuous w.r.t.~$=$ then the relation~$\contr_R$ is a fixpoint of
  the above construction, i.e., it is the contraction relation of the
  system~$R_\infty$ which uses~$=_R$ in place of~$=$. Let~$\sim$ be a
  binary relation on terms. If for any substitution~$\sigma$ such
  that~$P(\sigma(x_1),\ldots,\sigma(x_n))$ holds, and any~$\sigma'$
  such that $\sigma(x) \sim \sigma'(x)$ for all variables~$x$,
  also~$P(\sigma'(x_1),\ldots,\sigma(x_1'))$ holds, then the
  condition~$P(x_1,\ldots,x_n)$ is \emph{stable under~$\sim$}.
\end{definition}

The following is a simple but crucial observation, which implies that
it suffices to consider conversions in~$\CLCz$. A generalization of
this fact was already shown in~\cite[Lemma~3.7]{Vrijer1999}. The proof
is by induction on the maximum level of the contractions/expansions in
$q =_{\sCLCp} q'$.

\begin{lemma}\label{lem_clc_equivalent}
  The following are equivalent: $q =_{\sCLCz} q'$, $q =_{\sCLC} q'$,
  and $q =_{\sCLCp} q'$.
\end{lemma}

\begin{definition}\label{def_terms}
  We define \emph{insignificant terms}, or \emph{$i$-terms}, to be the
  terms of~$\CLC$, i.e., terms over the
  signature~$\Sigma = \{@, \Cs, \Ts, \Fs, \Ks, \Ss \}$ where~$@$ is a
  binary function symbol and the other symbols are constants. We write
  $t_1 t_2$ instead of $@(t_1,t_2)$.  The set of \emph{labeled terms},
  or \emph{$l$-terms}, is the set of terms over the signature
  consisting of the symbols of~$\Sigma$, the \emph{labeled constants}
  $\Cs_1, \Cs_2, \Ts_1, \Fs_1, \Ks_1$ and $\Ss^{n_0,\ldots,n_k}$ for
  each $k,n_1,\ldots,n_k \in \Nbb_+$, and an $n$-ary function
  symbol~$\Ps^n$ for each $n \in \Nbb_+$. We write
  $\la t_1,\ldots,t_n \ra$ instead of $\Ps^n(t_1,\ldots,t_n)$. We
  adopt the convention $\la t \ra \equiv t$. If
  $t \equiv \la t_1, \ldots, t_n \ra$ with $n > 1$, then we say
  that~$t$ is a \emph{tuple of length~$n$}. Note that
  $\la t \ra \equiv t$ is just a notational convention. We say that
  $\la t_1, \ldots, t_n \ra$ is a tuple \emph{only when $n > 1$}.

  An \emph{erasure} of an $l$-term is defined as follows:
  \begin{itemize}
  \item an $i$-term is an erasure of itself,
  \item $\Cs$ is an erasure of $\Cs_1$ and~$\Cs_2$; $\Ts$ is an erasure of
    $\Ts_1$; $\Fs$ is an erasure of $\Fs_1$; $\Ks$ is an erasure of $\Ks_1$; $\Ss$
    is an erasure of $\Ss^{n_1,\ldots,n_k}$,
  \item if $q_1$, $q_2$ are erasures of $t_1$, $t_2$, respectively,
    then~$q_1 q_2$ is an erasure of~$t_1 t_2$,
  \item if $q_i$ is an erasure of $t_i$, for some $1 \le i \le n$,
    then~$q_i$ is an erasure of $\la t_1,\ldots,t_n \ra$.
  \end{itemize}
  The \emph{leftmost erasure} of~$t$, denoted~$\erase{t}$, is the
  erasure in which we always choose~$i=1$ in the last point above. We
  write $t \equiverased q$ if \emph{every} erasure of~$t$ is identical
  with~$q$.

  We define \emph{significant terms}, or \emph{$s$-terms},
  inductively.
  \begin{itemize}
  \item Any labeled constant is an $s$-term.
  \item If $t_1$ is an $s$-term and $t_2$ is an $l$-term, then $t_1
    t_2$ is an $s$-term.
  \end{itemize}
  In other words, an $s$-term is an $l$-term whose leftmost constant
  is labeled.
\end{definition}

In what follows $t$, $t_1$, $t_2$, $r$, $r_1$, $r_2$, $s$, $s_1$,
etc.~stand for $l$-terms; and $q$, $q_1$, $q_2$, etc.~stand for
$i$-terms; unless otherwise qualified. Also, whenever we talk about
terms without further qualification, we implicitly assume them to be
$l$-terms.

\begin{definition}\label{def_clc_s}
  The system~$\CLCs$ is defined by the following \emph{significant
    reduction rules}:
  \[
  \begin{array}{rclcrcl}
    \Cs_1 \Ts_1 x y &\to& x &\quad\quad& \Cs_2 z x y &\to& x \quad\From\quad \erase{x} =_{\sCLC} \erase{y} \\
    \Cs_1 \Fs_1 x y &\to& y &\quad\quad& \Cs_2 z x y &\to& y \quad\From\quad \erase{x} =_{\sCLC} \erase{y} \\
    \Ks_1 x y &\to& x &\quad& &&
  \end{array}
  \]
  \[
  \begin{array}{l}
    \Ss^{\vec{n}} x \la y_1,\ldots,y_{k}\ra \la \vec{z}_0, \ldots, \vec{z}_k
    \ra\quad
    \to
    \quad
    x \la \vec{z}_0 \ra \la (y_1 \la \vec{z}_1 \ra),
    \ldots, (y_{k} \la \vec{z}_{k} \ra) \ra
    \quad\From \quad \varphi
  \end{array}
  \]
  where
  \begin{eqnarray*}
    \varphi &\equiv& \erase{z_{i,j}} =_{\sCLC}
    \erase{z_{i',j'}} \text{ for } i,i'=0,\ldots,k,\,
    j=1,\ldots,n_i,\, j'=1,\ldots,n_{i'}, \text{ and} \\ && \erase{y_i} =_{\sCLC}
    \erase{y_j} \text{ for } i,j=1,\ldots,k,
  \end{eqnarray*}
  and $\vec{n}$ stands for $n_0,\ldots,n_{k}$, and $\vec{z}_i$ stands
  for $z_{i,1},\ldots,z_{i,n_i}$, for $i=0,\ldots,k$. When dealing
  with terms whose leftmost constant is~$\Ss^{n_0,\ldots,n_k}$, we
  will often use this kind of vector notation. Recall the convention
  $\la t \ra \equiv t$. Hence, if e.g.~$n_0 = 1$, then
  $\la \vec{z}_0 \ra \equiv \la z_{0,1} \ra \equiv z_{0,1}$ in the
  above rule. The condition~$\varphi$ ensures that the leftmost
  erasures of all~$z_{i,j}$ are convertible in~$\CLC$, and that the
  leftmost erasures of all~$y_i$ are convertible in~$\CLC$. Some
  examples of significant reduction rules for~$\Ss^{\vec{n}}$
  (omitting the conditions) are:
  \[
    \begin{array}{rcl}
      \Ss^{1,1} x y_1 \la z_{0,1}, z_{1,1}\ra &\to& x z_{0,1} (y_1
                                                    z_{1,1}) \\
      \Ss^{1,2,1} x \la y_1, y_2 \ra \la z_{0,1}, z_{1,1}, z_{1,2},
      z_{2,1}\ra &\to& x z_{0,1} \la y_1 \la z_{1,1}, z_{1,2} \ra, y_2
                       z_{1,2}\ra \\
      \Ss^{2,2} x y_1 \la z_{0,1}, z_{0,2}, z_{1,1}, z_{1,2}\ra &\to& x\la z_{0,1}, z_{0,2}
                                                                      \ra
                                                                      (y_1
                                                                      \la
                                                                      z_{1,1},
                                                                      z_{1,2}
                                                                      \ra)
    \end{array}
  \]
  For instance, the condition for the second of these rules states
  that $\erase{y_1} =_{\sCLC} \erase{y_2}$,
  $\erase{z_{0,1}} =_{\sCLC} \erase{z_{1,1}}$,
  $\erase{z_{0,1}} =_{\sCLC} \erase{z_{1,2}}$,
  $\erase{z_{0,1}} =_{\sCLC} \erase{z_{2,1}}$, $\erase{z_{1,1}}
  =_{\sCLC} \erase{z_{1,2}}$, etc.

  Note that the equality~$=_\sCLC$ in the conditions refers to the
  system~$\CLC$, not~$\CLCs$. Note also that all rules of~$\CLCs$ are
  linear, disregarding the side-conditions.

  Reduction by a rule in~$\CLCs$ is called \emph{significant
    reduction}, or $s$-reduction. One-step $s$-reduction is denoted
  by~$\to_s$. Analogously, we use the terminology and notation of
  $s$-contraction, $s$-expansion, $s$-redex, $s$-normal form ($s$-NF),
  etc. Note that every $s$-redex is an $s$-term. We write
  $t \to_{s{-}} t'$ if $t \to_s t'$ and the $s$-contraction is not by
  the second rule for~$\Cs_2$ and it does not occur inside a tuple.

  An \emph{$i$-redex} is a $\CLC$-redex which is also an $i$-term. An
  $l$-term~$t_1$ is said to \emph{$i$-reduce} to~$t_2$,
  denoted~$t_1 \to_i t_2$, if~$t_1 \to_\sCLC t_2$ and the redex
  contracted in~$t_1$ is an $i$-term. An $l$-term~$t_1$ is said to
  \emph{$i$-expand} to~$t_2$ if $t_2 \to_i t_1$. We write
  $t_1 \to_{i,s} t_2$ if $t_1 \to_i t_2$ or $t_1 \to_s t_2$.
\end{definition}

Actually, we will consider mostly $l$-terms whose all erasures are
identical. For such a term an $s$-contraction by a rule
for~$\Ss^{\vec{n}}$ in~$\CLCs$ naturally corresponds to a
$\CLC$-contraction on its erasure. We could get rid of the side
conditions in the rules for~$\Ss^{\vec{n}}$ and consider exclusively
terms whose all erasures are identical. But then we would need to
require $i$/$s$-contractions/expansions to always occur ``in the same
way'' (modulo labeling) in all components of a tuple. This would
complicate the inductive proofs concerning the relations~$\to_s$,
$\to_i$, etc. Hence, the role of the conditions in the rules
for~$\Ss^{\vec{n}}$ is purely technical.

\begin{lemma}\label{lem_sn}
  The system $\CLCs$ is terminating.
\end{lemma}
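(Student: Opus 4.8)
The plan is to prove $s$-strong normalization by exhibiting a monotone weight function $\mu$ from $l$-terms to $\Nbb$ that strictly decreases under every $s$-contraction; since $\Nbb$ is well-founded this is enough. I would define $\mu$ additively over the term structure: put $\mu(x) = 0$ for a variable, $\mu(c) = 0$ for every constant $c$ \emph{except} $\mu(S^{n_0,\ldots,n_k}) = k$, and set $\mu(t_1 t_2) = \mu(t_1) + \mu(t_2) + 1$ and $\mu(\la t_1,\ldots,t_n\ra) = \sum_{i=1}^n \mu(t_i)$. In words, $\mu(t)$ counts the application nodes of $t$ and adds, for each occurrence of a labelled constant $S^{n_0,\ldots,n_k}$, the index $k$ recorded in its superscript; tuple nodes are deliberately \emph{not} counted.

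Because $\mu$ is a sum of local contributions, it is monotone under contexts: for any context $C[\,]$ one has $\mu(C[u]) = \mu(u) + \mathrm{const}(C)$, so $\mu(C[u]) - \mu(C[u']) = \mu(u) - \mu(u')$. It therefore suffices to check that $\mu$ strictly decreases at the root of each rule scheme of Definition~\ref{def_clc_s}, and this reduces the whole statement to finitely many elementary computations. The side conditions $\varphi$ and $\erase{x} =_{\CLC} \erase{y}$ are irrelevant here, since they only restrict when a rule may fire.

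For the projection-type rules (all the $C_1$, $C_2$, and $K_1$ rules) the decrease is immediate, as the right-hand side is a single variable while the left-hand side has two or three application nodes on top: e.g.\ $\mu(C_1 T_1 x y) = 3 + \mu(x) + \mu(y) > \mu(x)$, and discarding arguments only helps. The crux is the rule for $S^{\vec{n}}$, and here I would use crucially that all rules of $\CLC_s$ are linear, so each of the variables $x$, $y_1,\ldots,y_k$, $z_{i,j}$ occurs exactly once on both sides; hence under any instantiation the weights of the substituted subterms cancel and only the ``framework'' weights survive. On the left the framework contributes three top application nodes plus the weight $k$ of the head $S^{\vec{n}}$, giving $3 + k$; on the right it contributes the two top application nodes of $x\,\la\vec{z}_0\ra\,(\cdots)$ together with the $k$ freshly created applications $y_i\,\la\vec{z}_i\ra$ inside the tuple, and no $S$-constant, giving $2 + k$. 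Thus $\mu$ drops by exactly $1$ at the root.

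The main obstacle is precisely this last computation, and it dictates the shape of $\mu$. A naive weight such as plain term size fails: the $S$-rule creates $k$ new application nodes while removing only the head $S$, so for $k \ge 2$ the size does not decrease. The resolution is to let $S^{n_0,\ldots,n_k}$ carry weight exactly $k$; combined with the fact that the contraction turns three top application nodes into two, this makes the left framework weight $3+k$ exceed the right framework weight $2+k$ by one unit, despite the $k$ new applications. A minor bookkeeping point to dispatch along the way is the convention $\la t \ra \equiv t$, but since $\mu$ of a one-element tuple equals $\mu$ of its component, none of the counts are affected. With $\mu$ so defined, every $s$-contraction strictly decreases a natural number, and well-foundedness of $\Nbb$ completes the proof.
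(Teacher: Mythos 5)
Your proof is correct: the weight $\mu$ you define (application nodes plus the index $k$ carried by each $S^{n_0,\ldots,n_k}$, tuples not counted) is additive over contexts and strictly decreases under every rule of $\CLC_s$, including the $S^{\vec{n}}$ rule where the careful $3+k$ versus $2+k$ bookkeeping is needed. The paper's proof follows the same general strategy (a strictly decreasing $\Nbb$-valued measure) but with a much simpler measure: the number of labelled constants. Every rule of $\CLC_s$ erases the labelled constant at the head of the redex, the rules are linear, and the right-hand sides introduce no new labelled constants, so that count drops by at least one at each step; this sidesteps entirely the counting of application nodes that forces you to weight $S^{n_0,\ldots,n_k}$ by $k$. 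Your measure works, but the observation that ``labelled constants are only ever consumed, never created'' gives the one-line argument.
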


\begin{proof}
  The number of labeled constants decreases with each
  $s$-contraction.
\end{proof}

\begin{lemma}
  If $t_1 \to_s t_2$ then $\erase{t_1} =_\sCLC \erase{t_2}$.
\end{lemma}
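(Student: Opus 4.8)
The plan is to argue by induction on the inductive definition of the relation $\xto[s]{}$, separating the base cases, in which $t_1$ is itself an $s$-redex contracted at the root, from the context cases, in which the contraction takes place inside an application or inside a component of a tuple. Throughout I will use two elementary facts about leftmost erasure: that it distributes over application, $\erase{s_1 s_2} \equiv \erase{s_1}\,\erase{s_2}$, and that it selects the first component of a tuple, $\erase{\la s_1,\ldots,s_n\ra} \equiv \erase{s_1}$. I will also use freely that $=_\CLC$ is an equivalence relation closed under contexts and that $\to_\CLC\,\subseteq\,=_\CLC$.

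For the base cases I will take each rule of $\CLC_s$ in turn. For the rules $C_1 T_1 x y \to x$, $C_1 F_1 x y \to y$, $C_2 T x y \to x$, $C_2 F x y \to y$, $C_2 T_1 x y \to x$, $C_2 F_1 x y \to y$ and $K_1 x y \to x$, the leftmost erasure of the redex, once the labels are erased, is a $\CLC$-redex for the first, second or $K$ rule that contracts in one step to the leftmost erasure of the contractum; for instance $\erase{C_1 T_1 x y} \equiv C\,T\,\erase{x}\,\erase{y} \to_\CLC \erase{x}$. For the conditional rule $C_2 z x y \to x \From \erase{x} =_\CLC \erase{y}$, the leftmost erasure $C\,\erase{z}\,\erase{x}\,\erase{y}$ will contract to $\erase{x}$ by the third rule of $\CLC$, whose side condition $\erase{x} =_\CLC \erase{y}$ is exactly the condition carried by the $\CLC_s$-rule; so again $\erase{t_1} =_\CLC \erase{t_2}$.

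The only delicate base case is the rule for $S^{\vec{n}}$, and I expect this to be the main obstacle, because of the tuple bookkeeping and the need to invoke the side condition $\varphi$. Writing out the leftmost erasures and using that each tuple contributes only its first component, the redex will erase to $S\,\erase{x}\,\erase{y_1}\,\erase{z_{0,1}}$ and the contractum to $\erase{x}\,\erase{z_{0,1}}\,(\erase{y_1}\,\erase{z_{1,1}})$. The redex erasure $\CLC$-contracts by the rule for $S$ to $\erase{x}\,\erase{z_{0,1}}\,(\erase{y_1}\,\erase{z_{0,1}})$, which differs from the contractum erasure only in that $\erase{z_{0,1}}$ stands where $\erase{z_{1,1}}$ is required. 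Since $\erase{z_{0,1}} =_\CLC \erase{z_{1,1}}$ is an instance of $\varphi$ (taking $i=0$, $j=1$, $i'=1$, $j'=1$), congruence of $=_\CLC$ will bridge this gap and yield $\erase{t_1} =_\CLC \erase{t_2}$.

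The context cases will be routine, using only the induction hypothesis together with congruence of $=_\CLC$. If the contraction occurs in the function or the argument of an application, then $\erase{t_1}$ and $\erase{t_2}$ arise by applying congruence to the instance of the induction hypothesis for the contracted subterm. If it occurs in a component of a tuple, then either it lies in the first component, where the induction hypothesis applies directly, or it lies in a later component, in which case the leftmost erasures of $t_1$ and $t_2$ are literally identical; either way $\erase{t_1} =_\CLC \erase{t_2}$, completing the induction.
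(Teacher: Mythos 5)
Your proof is correct; the paper states this lemma without proof, and your induction on the definition of $s$-contraction, with the rule-by-rule check of the base cases (in particular the $S^{\vec{n}}$ case, where the condition $\varphi$ supplies exactly the equation $\erase{z_{0,1}} =_\CLC \erase{z_{1,1}}$ needed to repair the mismatch left by the $\CLC$-rule for $S$) and congruence of $=_\CLC$ for the context cases, is precisely the routine argument the author is leaving to the reader.
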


The above simple lemma implies that the conditions in significant
reduction rules are stable under $s$-reduction and $s$-expansion. It
is obvious that they are also stable under $i$-reduction and
$i$-expansion.

\begin{lemma}\label{lem_s_minus_erase}
  If $t \equiverased q$ and $t \to_{s{-}} t'$ then there is~$q'$ with
  $q \to_{\sCLC} q'$ and $t' \equiverased q'$.
\end{lemma}

\begin{proof}
  Because all erasures of~$t$ are identical and the second rule
  for~$\Cs_2$ is not used, the $s{-}$-reduction may be simulated by a
  $\CLC$-reduction in an obvious way. Because the $s{-}$-contraction
  does not occur inside a tuple, all erasures of~$t'$ are still
  identical.
\end{proof}

In the next definition we introduce the predicate~$\Da_{\Fs_1}$ and
the notion of standard $l$-terms. Intuitively, an $l$-term~$t$ is
standard if the labelings in~$t$ have the meaning we intend to assign
them, i.e.~if~$t$ is a term obtained by the process informally
described in the previous section.

\begin{definition}\label{def_standard}
  An $l$-term~$t$ is \emph{standard} if for every subterm~$t'$ of~$t$
  the following hold:
  \begin{enumerate}
  \item\label{std_i_or_s_or_tuple} $t'$ is either an $i$-term, an
    $s$-term or a tuple,
  \item\label{std_c_1} if $t' \equiv \Cs_1 t_0 t_1 t_2$ and $t_0$ is in
    $s$-NF, then $t_0 \equiv \Ts_1$ or $t_0 \equiv \Fs_1$,
  \item\label{std_c_2} if $t' \equiv \Cs_2 t_0 t_1 t_2$ then
    $\erase{t_1} =_{\sCLC} \erase{t_2}$,
  \item\label{std_s} if $t' \equiv \Ss^{n_0,\ldots,n_k} t_0 t_1 t_2$
    then~$t_2$ is a tuple of length $\sum_{i=0}^k n_k$ and if $k > 1$
    then~$t_1$ is a tuple of length~$k$,
  \item\label{std_s_term_reduce} if $t'$ is an $s$-term and
    $t' \to_s^* t''$, then~$t''$ is also an $s$-term,
  \item\label{std_no_nested_tuple} if
    $t' \equiv \la t_1,\ldots,t_n \ra$ with $n > 1$, then none of
    $t_1,\ldots,t_n$ is a tuple.
  \end{enumerate}
  An $l$-term~$t$ is \emph{strongly standard} if $t \to_s^* t'$
  implies that~$t'$ is standard. We write $t\Da_{\Fs_1}$ if~$t$ is
  strongly standard and has no $s$-NFs other than~$\Fs_1$, i.e.~if
  $t \to_s^! t'$ then $t' \equiv \Fs_1$.
\end{definition}

Point~\ref{std_i_or_s_or_tuple} in Definition~\ref{def_standard}
essentially ensures that a standard term may be decomposed into a
``significant'' prefix and an ``insignificant'' suffix. A labeled term
which is not standard is e.g.~$\Cs \Ts_1$, because it is neither an
$s$-term, nor an $i$-term, nor a tuple. Other examples of non-standard
terms are: $\Cs_1 \Ts \Fs \Fs$, $\Cs_2 \Cs \Ts \Fs$,
$\Ss^{1,1,1} \Ts \Ts \Ts$, $\Ks_1 \Fs \Fs$, $\Cs_2 \Cs \Ts_1 \Ts$,
$\Ss^{1,1} \Cs \Cs \la \Ts_1, \Ts_1 \ra$,
$\Ks_1 \la \Ts_1, \Ts_1 \ra \Ts_1$, $\la \la \Cs, \Cs \ra, \Cs
\ra$. Examples of standard terms which are not strongly standard are:
$\Ss^{1,1} \Cs_1 \Cs \la \Ts_1, \Ts_1 \ra$,
$\Ss^{1,1} \Cs_1 \Cs \la \Ts, \Ts \ra \Ts$,
$(\Ks_1 \Cs_1 \Ts) \Ts \Ts \Ts$.

\begin{lemma}\label{lem_standard_properties}
  \begin{enumerate}
  \item Any $i$-term is standard.
  \item Any labeled constant is standard.
  \item Every subterm of a standard term is also standard.
  \item Every subterm of a term to which some strongly standard term
    $s$-reduces, is strongly standard.
  \item If $t_1 t_2$ is standard then $t_1$ is not a tuple.
  \item\label{prop_std_c_1} If $\Cs_1 t_0 t_1 t_2$ is a subterm of a
    strongly standard term, then $t_0 \to_s^* \Ts_1$ or
    $t_0 \to_s^* \Fs_1$.
  \end{enumerate}
\end{lemma}

\begin{proof}
  Follows from definitions. For the last point one also needs
  Lemma~\ref{lem_sn}.
\end{proof}

\section{Confluence proof}\label{sec_proof}

We now give technical details of our confluence proof. As outlined in
Section~\ref{sec_overview}, we show:
\begin{enumerate}
\item if $t \succ q$ and $t \Da_{\Fs_1}$, and $q \to_{\sCLCz} q'$,
  then there is~$t'$ with $t' \succ q'$ and $t' \Da_{\Fs_1}$
  (Corollary~\ref{cor_contr}),
\item if $t \succ q$ and $t \Da_{\Fs_1}$, and
  $q \leftidx{{}_{\sCLCz}}{\from} q'$, then there is~$t'$ with
  $t' \succ q'$ and $t' \Da_{\Fs_1}$ (Corollary~\ref{cor_expand}).
\end{enumerate}
The first part is proven by showing that $\CLCz$-reductions in~$q$ may
be simulated by $i$-reductions and $s$-reductions in~$t$, and that
$i$/$s$-reductions preserve~$\Da_{\Fs_1}$ (Lemma~\ref{lem_eqv_i} and
Lemma~\ref{lem_contr_s}). For the second part, we show that
$\CLCz$-expansions in~$q$ may be simulated by $i$-expansions and
$a$-expansions (Definition~\ref{def_a_redex}) in~$t$. The technical
notion of $a$-expansion is needed to ensure that the new subterms
of~$t'$ are labeled appropriately, in the way outlined in
Section~\ref{sec_overview} ($s$-contraction by itself does not put any
labeling restrictions on the terms erased in the
contraction). Moreover, $a$-expansion is also needed to facilitate the
proof that $t'\Da_{\Fs_1}$ (see the discussion before
Definition~\ref{def_a_redex}). Plain $s$-expansion does not
necessarily preserve~$\Da_{\Fs_1}$, while $a$-expansion does
(Lemma~\ref{lem_expand_a}).

In other words, we show that $\CLCz$-reductions (expansions) in
unlabeled terms may be simluated by $i$/$s$-reductions
($i$/$a$-expansions) in their labeled variants, and that
$i$/$s$-reductions ($i$/$a$-expansions) preserve~$\Da_{\Fs_1}$. A
conversion $q =_{\sCLCz} \Fs$ can then be translated into a conversion
$t =_{i,s,a} \Fs_1$ with no $s$-expansions or $a$-reductions, and
with~$t \equiverased q$. For instance, a conversion in~$\CLCz$
\[
\begin{array}{l}
  \Fs \from \Cs (\Ks \Fs \Omega) \Fs \Fs \from \Cs (\Ks \Fs \Omega)
  \Fs (\Cs \Ts \Fs (\Ks \Fs \Omega)) \to \Cs \Fs \Fs (\Cs \Ts \Fs (\Ks
  \Fs \Omega)) \from \\ \quad \Cs \Fs \Fs (\Cs (\Ks \Ts \Fs) \Fs (\Ks \Fs
  \Omega)) \to \Cs (\Ks \Ts \Fs) \Fs (\Ks \Fs \Omega) \to \Cs (\Ks \Ts
  \Fs) \Fs \Fs \to \Fs \from \\ \quad \Ks \Fs (\Cs \Fs) \from \Ss \Ks \Cs \Fs
  \from \Ss \Ks \Cs (\Ks \Fs \Omega) \to \Ss \Ks \Cs \Fs
\end{array}
\]
will be translated to
\[
\begin{array}{l}
  \Fs_1 \leftidx{{}_a}{\from} \Cs_2 (\Ks \Fs \Omega) \Fs_1 \Fs_1 \leftidx{{}_a}{\from} \Cs_2 (\Ks \Fs \Omega)
  \Fs_1 (\Cs_1 \Ts_1 \Fs_1 (\Ks \Fs \Omega)) \to_i \Cs_2 \Fs \Fs_1 (\Cs_1 \Ts_1 \Fs_1 (\Ks
  \Fs \Omega)) \leftidx{{}_a}{\from} \\ \quad \Cs_2 \Fs \Fs_1 (\Cs_1 (\Ks_1 \Ts_1 \Fs) \Fs_1 (\Ks \Fs
  \Omega)) \to_s \Cs_1 (\Ks_1 \Ts_1 \Fs) \Fs_1 (\Ks \Fs \Omega) \to_i \Cs_1 (\Ks_1 \Ts_1
  \Fs) \Fs_1 \Fs \reduces_s \Fs_1 \leftidx{{}_a}{\from} \\ \quad \Ks_1
  \Fs_1 (\Cs \Fs) \leftidx{{}_a}{\from} \Ss^{1,1} \Ks_1 \Cs \la \Fs_1,
  \Fs \ra
  \leftidx{{}_{a,i}^{~*}}{\from} \Ss^{1,1} \Ks_1 \Cs \la \Ks_1 \Fs_1 \Omega,
  \Ks \Fs \Omega \ra \reduces_{s,i} \Ss^{1,1} \Ks_1 \Cs \la \Fs_1, \Fs \ra
\end{array}
\]
Since~$\Fs_1 \Da_{\Fs_1}$ and we prove that $i$/$s$-reductions and
$i$/$a$-expansions preserve~$\Da_{\Fs_1}$, we may conclude
that~$t \Da_{\Fs_1}$. Then by the definition of~$\Da_{\Fs_1}$ we
obtain a significant reduction $t \to_s^* \Fs_1$. In fact, the
reduction may be assumed to be a $s{-}$-reduction
(Lemma~\ref{lem_good_reduction}). By Lemma~\ref{lem_s_minus_erase}
this reduction $t \to_{s{-}}^{*} \Fs_1$ may be translated into a
$\CLC$-reduction by erasing the labelings. Hence finally
$q \reduces_\sCLC \Fs$ (Lemma~\ref{lem_f_nf}).

We first show that a $\CLCz$-contraction may be simulated by
$i$-reductions and $s$-reductions.

\begin{lemma}\label{lem_erase_contr}
  If~$t$ is strongly standard, $t \equiverased q$ and
  $q \contr_{\sCLCz} q'$, then there exists a term~$t'$ such that
  $t \to_{i,s}^{*} t'$ and $t' \equiverased q'$.
\end{lemma}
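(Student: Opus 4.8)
The plan is to argue by induction on the size of $t$, using that a strongly standard term is an $i$-term, an $s$-term, or a tuple (condition~\ref{std_i_or_s_or_tuple}), and then splitting on whether the contracted $\CLC_0$-redex sits at the root of $q$ or strictly inside one of its arguments. Two cases are immediate. If $t$ is an $i$-term, then $t \equiv q$, and since every $\CLC_0$-contraction is also a $\CLC$-contraction (a rule-3 step $C q_0 q_1 q_1 \to q_1$ has syntactically equal, hence $\CLC$-equal, branches, so it is a rule-3 step of $\CLC$), the redex is an $i$-redex and $t \xto[i]{} q' \equiverased q'$. If $t \equiv \la t_1,\ldots,t_n\ra$ is a tuple, then each $t_i \equiverased q$ is strongly standard and smaller, so the induction hypothesis gives $t_i \xto[i,s]{*} t_i'$ with $t_i' \equiverased q'$; contracting the components one by one yields $t \xto[i,s]{*} \la t_1',\ldots,t_n'\ra \equiverased q'$.

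Suppose now $t$ is an $s$-term and the redex lies strictly inside an argument of $q$. I would use Lemma~\ref{lem_s_redex} to write $t$ with its arguments aligned with those of $q$ up to $\equiverased$, locate the argument $s_j \equiverased q_j$ that contains the redex, apply the induction hypothesis to $s_j$ (descending componentwise through a tuple if $s_j$ is a tuple coming from an $S^{n_0,\ldots,n_k}$ position), and lift the resulting reduction back through the fixed head context by congruence of $\xto[i]{}$ and $\xto[s]{}$.

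The core of the argument is the root case, where $q$ is itself a redex; here I would invoke Lemma~\ref{lem_s_redex} to fix the shape of $t$ and proceed rule by rule. The $K$-rule is trivial, as $t \equiv K_1 t_0 t_1 \xto[s]{} t_0 \equiverased q'$. For the $C$-rules the work is to show the condition subterm $t_0$ can be $s$-reduced to the Boolean needed to fire an appropriate $\CLC_s$-rule. When the leftmost constant is $C_1$, Lemma~\ref{lem_c_1_const_reduce} gives $t_0 \xto[s]{*} T_1$ or $t_0 \xto[s]{*} F_1$; since $t_0 \equiverased T$ (resp.\ $\equiverased F$) the term $t_0$ contains no occurrence of $F_1$ (resp.\ $T_1$), and $s$-contraction introduces no new labelled constants (all $\CLC_s$ right-hand sides are built from the rule variables), so the wrong branch is impossible and the right one succeeds; for the third rule of $\CLC_0$ both branches are in fact acceptable because $t_1 \equiverased q_1$ and $t_2 \equiverased q_1$. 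When the leftmost constant is $C_2$, the third rule is simulated directly by the conditional rule $C_2 z x y \to x \From \erase{x} =_{\CLC} \erase{y}$, whose condition holds since $\erase{t_1} = q_1 = \erase{t_2}$, while the first and second rules are simulated by $C_2 T x y \to x$ / $C_2 T_1 x y \to x$ (resp.\ the $F$-variants), which requires pinning down, from $t_0 \equiverased T$ (resp.\ $\equiverased F$) together with standardness, that $t_0$ is the bare constant $T$ or $T_1$ (resp.\ $F$ or $F_1$) rather than some compound term; controlling the shape of this first argument is exactly where the standardness hypothesis must carry its weight.

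The step I expect to be the main obstacle is the $S$-rule. Here $q \equiv S q_0 q_1 q_2 \to q_0 q_2 (q_1 q_2)$, and Lemma~\ref{lem_s_redex} gives $t \equiv S^{n_0,\ldots,n_k} t_0 t_1 t_2$ with $t_1 \equiverased q_1$ (a tuple of size $k$ when $k>1$) and $t_2 \equiverased q_2$ (a tuple of size $\sum_{i=0}^k n_i$). I would match $t$ against the left-hand side $S^{\vec{n}} x \la y_1,\ldots,y_k\ra \la \vec{z}_0,\ldots,\vec{z}_k\ra$ and verify the condition $\varphi$: each component $z_{i,j}$ of $t_2$ satisfies $z_{i,j} \equiverased q_2$ and each $y_i$ satisfies $y_i \equiverased q_1$, so all the demanded erasure equalities hold (indeed as identities). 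Firing the rule produces $t_0 \la \vec{z}_0\ra \la y_1\la \vec{z}_1\ra,\ldots,y_k\la \vec{z}_k\ra\ra$, and the delicate bookkeeping is to check this is $\equiverased q_0 q_2 (q_1 q_2)$, which follows from $\la \vec{z}_i\ra \equiverased q_2$ and $y_i\la \vec{z}_i\ra \equiverased q_1 q_2$ for each $i$. Keeping the tuple indexing consistent, including the degenerate conventions $\la t\ra \equiv t$ when some $n_i$ or $k$ equals~$1$, is the principal source of friction, and it is precisely the point at which the $S$-labelling scheme is shown to do its job.
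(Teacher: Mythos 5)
Your proposal is correct and follows essentially the same route as the paper: induction on size, splitting into the $i$-term, tuple, root $s$-redex, and non-root cases, with Lemma~\ref{lem_s_redex} fixing the shape of the $s$-term and the $\CLC_s$ rules fired case by case. The only (harmless) deviation is in the $C_1\,T\,q_1\,q_2$ subcase, where you reduce $t_0$ to $T_1$ via Lemma~\ref{lem_c_1_const_reduce} and a no-new-labels argument, whereas the paper observes directly from condition~\ref{std_c_1} that $t_0\equiv T_1$; both your proposal and the paper pin down the first argument of $C_2$ to a bare $T$ or $T_1$ at the same level of detail.
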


\begin{proof}
  Induction on the size of~$t$. First assume~$t$ is not a tuple
  and~$q$ is the $\CLCz$-redex contracted in $q \contr_{\sCLCz}
  q'$. If~$t \equiv q$ then $t \equiv q \to_i q'$ and we may take
  $t' \equiv q'$. If~$t \not\equiv q$ then~$t$ is not an $i$-term
  because $t \equiverased q$. Hence by~\ref{std_i_or_s_or_tuple} in
  Definition~\ref{def_standard} we conclude that~$t$ is an
  $s$-term. We have the following possibilities.
  \begin{itemize}
  \item If $q \equiv \Cs \Ts q_1 q_2 \to_{\sCLCz} q_1 \equiv q'$ then the
    leftmost constant in~$t$ is either~$\Cs_1$ or~$\Cs_2$.
    \begin{itemize}
    \item If $t \equiv \Cs_1 \Ts_1 t_1 t_2$ then $t \to_s t_1$ and
      $t_1 \equiverased q_1$, so we may take $t' \equiv t_1$.
    \item The case $t \equiv \Cs_1 \Ts t_1 t_2$ is impossible
      by~\ref{std_c_1} in Definition~\ref{def_standard}.
    \item If $t \equiv \Cs_2 t_0 t_1 t_2$ then $t_1 \equiverased q_1$
      and $\erase{t_1} =_{\sCLC} \erase{t_2}$ by~\ref{std_c_2} in
      Definition~\ref{def_standard}. Thus $t \to_s t_1$ and we may
      take $t' \equiv t_1$.
    \end{itemize}
  \item If $q \equiv \Cs \Fs q_1 q_2 \to_{\sCLCz} q_2$ then the
    argument is analogous. Note that the presence of the second rule
    for~$\Cs_2$ is necessary here.
  \item If $q \equiv \Cs q_0 q_1 q_1 \to_{\sCLCz} q_1$ then $t \equiv \Cs'
    t_0 t_1 t_2$ with $\Cs' \in \{\Cs_1,\Cs_2\}$, $t_0 \equiverased q_0$,
    $t_1 \equiverased q_1$ and $t_2 \equiverased q_1$.
    \begin{itemize}
    \item If $\Cs' \equiv \Cs_1$ then $t_0 \to_s^{*} \Ts_1$ or $t_0
      \to_s^{*} \Fs_1$ by Lemma~\ref{lem_standard_properties}. Hence $t
      \to_s^{*} t_1$ or $t \to_s^{*} t_2$. In the first case we may
      take $t' \equiv t_1$, and in the second we take $t' \equiv t_2$.
    \item If $\Cs' \equiv \Cs_2$ then $t \to_s t_1$ because
      $\erase{t_1} \equiv \erase{t_2} \equiv q_1$. Thus we take $t'
      \equiv t_1$.
    \end{itemize}
  \item If $q \equiv \Ks q_1 q_2 \to_{\sCLCz} q_1$ then
    $t \equiv \Ks_1 t_1 t_2 \to_s t_1$ with $t_1 \equiverased q_1$. We
    take $t' \equiv t_1$.
  \item If $q \equiv \Ss q_0 q_1 q_2 \to_{\sCLCz} q_0 q_2 (q_1 q_2)$
    then
    $t \equiv \Ss^{\vec{n}} s \la t_1,\ldots,t_k \ra \la
    \vec{r}_0,\ldots,\vec{r}_k \ra$ where the conventions regarding
    the vector notation are as in Definition~\ref{def_clc_s}, and
    $s \equiverased q_0$, and $t_i \equiverased q_1$ for
    $i=1,\ldots,k$, and $r_{i,j} \equiverased q_2$ for $i=0,\ldots,k$,
    $j=1,\ldots,i$. Thus
    \[
    t \to_s s \la \vec{r}_0 \ra \la t_1 \la
    \vec{r}_1 \ra, \ldots, t_{k} \la
    \vec{r}_k \ra \ra \equiverased q_0 q_2
    (q_1 q_2)
    \]
    and we may take~$t' \equiv s \la \vec{r}_0 \ra \la t_1
    \la \vec{r}_{1} \ra, \ldots, t_{k} \la
    \vec{r}_k \ra \ra$.
  \end{itemize}
  If~$t$ is a tuple or~$q$ is not the contracted $\CLCz$-redex, then
  the claim follows from the inductive hypothesis.
\end{proof}

The following technical lemma shows that~$\eqv_i$ may be postponed
after~$\to_s$.

\begin{lemma}\label{lem_i_postpone}
  If $t \eqv_i \cdot \to_s^* t'$ then
  $t \to_s^* \cdot \eqv_i^{\equiv} t'$.
\end{lemma}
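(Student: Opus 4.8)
The plan is to fix notation by writing the hypothesis as $t \eqv{i}{} m \xto[s]{} t'$, letting $p$ denote the position at which the $i$-step between $t$ and $m$ occurs and $p_s$ the position of the $s$-redex contracted in $m \xto[s]{} t'$, and then to argue by a case analysis on the relative position of $p$ and $p_s$. Two facts already established in the excerpt do all the heavy lifting. First, every rule of $\CLC_s$ is \emph{linear}, so contracting an $s$-redex neither duplicates nor splits the subterms at its argument positions: a subterm sitting at a variable position ends up, after the $s$-step, either erased or carried unchanged to a single place. Second, the conditions $\erase{\cdot} =_\CLC \erase{\cdot}$ in the $C_2$- and $S$-rules are \emph{stable} under $i$-contraction and $i$-expansion, so an $i$-step inside an argument never turns a rule's condition on or off. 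The overall aim is to produce a $u$ with $t \xto[s]{} u$ and $u \eqv{i}{\equiv} t'$.

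I would first dispose of the degenerate relative positions. The key structural observation is that an $s$-redex is an $s$-term, whereas both the redex \emph{and} the contractum of an $i$-step are $i$-terms (the contractum of an $i$-redex is built from subterms of that $i$-term, hence is again an $i$-term). Consequently the case $p = p_s$ is impossible, since the subterm at that position would have to be simultaneously an $s$-term and an $i$-term; and the case in which $p_s$ lies strictly below $p$ is impossible as well, since the $s$-redex would then be a subterm of an $i$-term. When $p$ and $p_s$ are disjoint, the two steps commute outright: the $s$-redex at $p_s$ survives untouched in $t$ (its pattern, arguments, and condition all lie away from $p$), contracting it yields $t \xto[s]{} u$, and replaying the $i$-step at $p$ gives $u \eqv{i}{} t'$.

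The substantive case is when $p$ lies strictly inside the $s$-redex at $p_s$; here the context above $p_s$ is identical in $t$ and $m$, so it suffices to analyse the $s$-redex itself and lift the result through that shared context. If $p$ lies below a \emph{variable} position of the applied $\CLC_s$-rule, then by linearity the corresponding subterm appears at most once on the right-hand side, and by stability the same rule (with its condition) still fires in $t$; contracting it produces $u$ with $u \equiv t'$ when the variable is erased (as with the $y$ in $C_1 T_1 x y \to x$ or $K_1 x y \to x$) and $u \eqv{i}{} t'$ when it is preserved. If instead $p$ meets a \emph{non-variable} symbol of the left-hand side, then for the labelled constants $C_1, C_2, T_1, F_1, K_1, S^{\vec n}$ and for the tuple brackets of the $S$-rules this cannot occur, since a labelled constant is not an $i$-term and the tuple shape is preserved by $i$-steps, so no $i$-redex or contractum can sit there.

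The main obstacle is the single delicate sub-case that survives: the rules $C_2 T x y \to x$ and $C_2 F x y \to y$, whose left-hand sides carry the \emph{unlabelled} constant $T$ (respectively $F$) at the second-argument position. In the contracting direction $t \xto[i]{} m$ the $i$-step may, read backwards, replace that $T$ by an $i$-redex $R$ with $R \xto[i]{} T$, so that the subterm of $t$ at $p_s$ is $C_2 R a b$, whose outer shape no longer matches $C_2 T x y$. The crux is to show that an $s$-redex nevertheless persists in $t$: one must argue that $C_2 R a b$ is an $s$-redex via the conditional rule $C_2 z x y \to x$, i.e.\ that $\erase{a} =_\CLC \erase{b}$, in which case the differing subterm is erased and $u \equiv t'$, the symmetric $F$-case being identical. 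Establishing that erasure equality is exactly where the real work lies, since it does not follow from the bare rewriting data and must be recovered from the structure surrounding the redex; this is the step I expect to be the principal difficulty.
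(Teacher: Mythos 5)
Your case analysis coincides with the paper's proof in all but the final sub-case: the paper likewise splits on whether the $s$-step is at the root, observes that disjoint steps commute, that the $s$-redex cannot lie inside the $i$-redex or $i$-contractum, and that an $i$-step below a variable position of the $s$-redex is absorbed using linearity of the significant rules and stability of their conditions. Up to that point your proposal is correct and essentially identical. The sub-case you leave open is a genuine gap in your write-up --- you do not establish $\erase{a} =_\CLC \erase{b}$, and it cannot be established from the hypotheses of the lemma --- but you should know that the paper does not close it either: its proof asserts that the $i$-redex ($i$-contractum) ``must occur below a variable position of the $s$-redex'', and this is false exactly for the rules $C_2\,T\,x\,y \to x$ and $C_2\,F\,x\,y \to y$, whose left-hand sides contain the unlabelled constants $T$, $F$; these are $i$-terms and can be the contractum of an $i$-step, precisely as you observe. (Your elimination of the labelled constants, the application spine and the tuple nodes is correct.)

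In fact the lemma as stated appears to be false. Take $t \equiv C_2\,(K\,T\,T)\,T\,F$. Then $t \xto[i]{} C_2\,T\,T\,F \xto[s]{} T$, yet $t$ is in $s$-normal form: the only candidate $s$-redex is the root as an instance of $C_2\,z\,x\,y \to x$, whose condition $T =_\CLC F$ fails (as the paper's main theorem ultimately confirms), and $C_2\,T\,x\,y \to x$ does not apply because the second argument is $K\,T\,T$ rather than $T$; so no $u$ with $t \xto[s]{} u$ exists. A repair must import extra information, e.g.\ the invariant that every subterm of the form $C_2\,t_0\,t_1\,t_2$ of the terms one actually works with satisfies $\erase{t_1} =_\CLC \erase{t_2}$ --- this holds for the terms produced in Lemma~\ref{lem_erase_expand}, is preserved by $i$- and $s$-steps, and makes your intended argument go through, since $C_2\,R\,a\,b$ is then a redex of the conditional rule and contracts to $a$, giving $u \equiv t'$. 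So your proposal is incomplete, but the missing step is not one you could have supplied from the statement alone; your diagnosis of where the real work lies is sharper than the paper's own treatment of this lemma.
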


\begin{proof}
  Suppose $t_1 \eqv_i t_2 \to_s t_3$. We proceed by induction on the
  definition of $t_2 \to_s t_3$.

  If~$t_2$ is the contracted $s$-redex then, because an $i$-redex
  ($i$-contractum) is an $i$-term, it is easy to see by inspecting
  Definition~\ref{def_clc_s} that the $i$-redex ($i$-contractum)
  in~$t_2$ must occur below a variable position of the
  $s$-redex. Since significant reduction rules are linear and their
  conditions are stable under $i$-reductions ($i$-expansions), the
  claim holds. Note that we need~$\eqv_{i}^{\equiv}$ instead
  of~$\eqv_{i}$ in the conclusion, because the $i$-redex
  ($i$-contractum) may be erased by the $s$-contraction.

  If~$t_2$ is not the $s$-redex, then $t_2 \equiv s_1 s_2$ or $t_2
  \equiv \la s_1,\ldots,s_n \ra$ with $n > 1$, and the claim is easily
  established possibly appealing to the inductive hypothesis.
\end{proof}

The next lemmas show that $i$-reductions/expansions and
$s$-reductions preserve~$\Da_{\Fs_1}$.

\begin{lemma}\label{lem_eqv_i_std}
  If $t$ is standard and $t \eqv_{i} t'$ then $t'$ is standard.
\end{lemma}

\begin{proof}
  We check that the conditions in Definition~\ref{def_standard} hold
  for every subterm~$s'$ of~$t'$. Note that because $i$-redexes and
  $i$-contracta are $i$-terms, $s'$ is an $i$-term or there is a
  subterm~$s$ of~$t$ such that $s \eqv_{i}^{\equiv} s'$.
  \begin{enumerate}
  \item If $s'$ is not an $i$-term, then there is a subterm~$s$ of~$t$
    such that $s \eqv_{i}^{\equiv} s'$. If~$s$ is an $i$-term or a
    tuple then so is~$s'$. Otherwise,~$s$ is an $s$-term
    by~\ref{std_i_or_s_or_tuple} in
    Definition~\ref{def_standard}. Then~$s'$ is also an $s$-term.
  \item Suppose $s' \equiv \Cs_1 t_0' t_1' t_2'$ with $t_0'$ in
    $s$-NF. Since~$s'$ is not an $i$-term, there is a subterm~$s$
    of~$t$ such that $s \equiv \Cs_1 t_0 t_1 t_2$ and
    $t_i \eqv_{i}^{\equiv} t_i'$ for $i=0,1,2$. Since~$t_0'$ is in
    $s$-NF and $t_0 \eqv_{i}^{\equiv} t_0'$, the term~$t_0$ is also in
    $s$-NF. Thus $t_0 \equiv \Ts_1$ or $t_0 \equiv \Fs_1$
    by~\ref{std_c_1} in Definition~\ref{def_standard}. Hence
    $t_0' \equiv \Ts_1$ or $t_0' \equiv \Fs_1$.
  \item Suppose $s' \equiv \Cs_2 t_0' t_1' t_2'$. Since~$s'$ is not an
    $i$-term, there is a subterm~$s$ of~$t$ such that
    $s \equiv \Cs_1 t_0 t_1 t_2$ and $t_i \eqv_{i}^{\equiv} t_i'$ for
    $i=0,1,2$. By~\ref{std_c_2} in Definition~\ref{def_standard} we
    have $\erase{t_1} =_\sCLC \erase{t_2}$. Hence also
    $\erase{t_1'} =_\sCLC \erase{t_2'}$, because
    $t_i \eqv_{i}^{\equiv} t_i'$ implies
    $\erase{t_i} =_\sCLC \erase{t_i'}$.
  \item Suppose $s' \equiv \Ss^{n_0,\ldots,n_k} t_0' t_1'
    t_2'$. Since~$s'$ is not an $i$-term, there is a subterm~$s$
    of~$t$ such that $s \equiv \Ss^{n_0,\ldots,n_k} t_0 t_1 t_2$ and
    $t_i \eqv_{i}^{\equiv} t_i'$ for $i=0,1,2$. By~\ref{std_s} in
    Definition~\ref{def_standard} we conclude that~$t_2$ is a tuple of
    length $n=\sum_{i=0}^kn_i$, and if $k > 1$ then~$t_1$ is a tuple
    of length~$k$. The same holds for~$t_2'$ and~$t_1'$, because a
    tuple cannot be an $i$-redex or an $i$-contractum.
  \item Suppose $s'$ is an $s$-term. There is a subterm~$s$ of~$t$
    such that $s \eqv_{i}^{\equiv} s'$. Since~$s'$ is an $s$-term, so
    is~$s$. Suppose $s' \to_s^{*} r'$. By Lemma~\ref{lem_i_postpone}
    there is~$r$ such that $s \to_s^{*} r \eqv_{i}^{\equiv}
    r'$. By~\ref{std_s_term_reduce} in Definition~\ref{def_standard},
    the term~$r$ is an $s$-term. Hence,~$r'$ is also an $s$-term.
  \item Suppose $s' \equiv \la t_1',\ldots,t_n' \ra$ with $n >
    1$. Since~$s'$ is not an $i$-term, there is a subterm~$s$ of~$t$
    such that $s \equiv \la t_1,\ldots,t_n \ra$ and
    $t_i \eqv_{i}^{\equiv} t_i'$ for
    $i=1,\ldots,n$. By~\ref{std_no_nested_tuple} in
    Definition~\ref{def_standard} none of $t_1,\ldots,t_n$ is a
    tuple. Hence, none of $t_1',\ldots,t_n'$ is a tuple either.
  \end{enumerate}
\end{proof}

\begin{lemma}\label{lem_eqv_i}
  If $t \Da_{\Fs_1}$ and $t \eqv_{i} t'$ then $t' \Da_{\Fs_1}$.
\end{lemma}

\begin{proof}
  Suppose $t' \to_s^{*} t_0'$. By Lemma~\ref{lem_i_postpone} there
  is~$t_0$ with $t \to_s^{*} t_0$ and $t_0 \eqv_{i}^\equiv
  t_0'$. Because~$t$ is strongly standard, $t_0$ is
  standard. Hence~$t_0'$ is standard by
  Lemma~\ref{lem_eqv_i_std}. Therefore~$t'$ is strongly standard.

  Suppose $t' \to_s^{*} t_0'$ with~$t_0'$ in $s$-NF. By
  Lemma~\ref{lem_i_postpone} there is~$t_0$ with
  $t \to_s^{*} t_0 \eqv_{i}^{\equiv} t_0'$. Since~$t_0'$ is in $s$-NF,
  so is~$t_0$, because an $i$-contraction or an $i$-expansion cannot
  create an $s$-redex. Since $t \Da_{\Fs_1}$ we obtain
  $t_0 \equiv \Fs_1$. Thus $t_0' \equiv t_0 \equiv \Fs_1$.
\end{proof}

\begin{lemma}\label{lem_contr_s}
  If $t \Da_{\Fs_1}$ and $t \to_s t'$ then $t' \Da_{\Fs_1}$.
\end{lemma}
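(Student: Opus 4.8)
The plan is to observe that the statement is essentially a closure property of $\leadsto$ under a single $s$-contraction, and that both clauses defining $t' \leadsto F_1$ follow immediately from the corresponding clauses for $t$ by transitivity of $\xto[s]{*}$. The only fact used is that any $s$-reduction issuing from $t'$ is the tail of an $s$-reduction issuing from $t$, obtained by prepending the given step $t \xto[s]{} t'$.

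First I would verify that $t'$ is strongly standard. Take any $t''$ with $t' \xto[s]{*} t''$. Prepending the hypothesized step yields $t \xto[s]{} t' \xto[s]{*} t''$, hence $t \xto[s]{*} t''$. Since $t$ is strongly standard (being the first conjunct of $t \leadsto F_1$), we conclude $t''\da$. As $t''$ was arbitrary, $t'$ is strongly standard. Next I would check that $t'$ has no $s$-NF other than $F_1$. Suppose $t' \xto[s]{*} t''$ with $t''$ in $s$-NF. Again $t \xto[s]{*} t''$ with $t''$ in $s$-NF, and since $t \leadsto F_1$ this forces $t'' \equiv F_1$. Combining the two clauses gives $t' \leadsto F_1$.

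There is no genuine obstacle here, in contrast to the forthcoming lemmas concerning $s$-expansion: no case analysis on the shape of the contracted redex is needed, because the argument never inspects the step $t \xto[s]{} t'$ beyond using it to extend reductions out of $t'$ backwards to reductions out of $t$. The existence of at least one $s$-NF reachable from $t'$, which makes the second clause meaningful rather than vacuous, is guaranteed by the strong normalization of $\CLC_s$ established in Lemma~\ref{lem_sn}, but this is not even required for the formal implication above. Thus the proof reduces to the two short observations just described.
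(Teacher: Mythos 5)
Your argument is correct and is exactly the reasoning the paper leaves implicit: its proof of this lemma is simply ``Obvious,'' since every $s$-reduction out of $t'$ prepends to one out of $t$, so both the strong standardness and the uniqueness of the $s$-normal form $F_1$ are inherited directly. Your write-up just makes this explicit.
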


\begin{corollary}\label{cor_contr}
  If $t \Da_{\Fs_1}$, $t \equiverased q$ and $q \to_{\sCLCz} q'$ then
  there is~$t'$ with $t' \equiverased q'$ and $t' \Da_{\Fs_1}$.
\end{corollary}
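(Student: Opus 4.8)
The plan is to read this off almost directly from the three lemmas immediately preceding it, so the proof should be short: Lemma~\ref{lem_erase_contr} supplies a witness $t'$ with $t' \equiverased q'$, and Lemmas~\ref{lem_eqv_i} and~\ref{lem_contr_s} show that $\leadsto F_1$ survives the reduction that produces it.

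First I would invoke the hypothesis $t \leadsto F_1$ to conclude that $t$ is strongly standard (this is part of the definition of $\leadsto F_1$). With $t$ strongly standard, $t \equiverased q$ and $q \to_{\CLC_0} q'$, the simulation lemma (Lemma~\ref{lem_erase_contr}) applies and yields a term $t'$ such that $t \xto[i,s]{*} t'$ and $t' \equiverased q'$. This already delivers the first half of the conclusion, namely $t' \equiverased q'$, so it remains only to establish $t' \leadsto F_1$.

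Second, I would show that $\leadsto F_1$ is preserved along any $\xto[i,s]{*}$ reduction, by a routine induction on the number of steps in $t \xto[i,s]{*} t'$. The reflexive case is immediate. For the inductive step, consider a single step $s \xto[i,s]{} s'$ with $s \leadsto F_1$: by definition this step is either an $s$-contraction or an $i$-contraction. If $s \xto[s]{} s'$, then Lemma~\ref{lem_contr_s} gives $s' \leadsto F_1$. If $s \xto[i]{} s'$, then since $\xto[i]{}$ is contained in the symmetric closure $\eqv{i}{}$, Lemma~\ref{lem_eqv_i} applies and again gives $s' \leadsto F_1$. Chaining these implications through the whole reduction $t \xto[i,s]{*} t'$ yields $t' \leadsto F_1$, completing the argument.

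There is no genuine obstacle here, since all the difficulty has been absorbed into the preceding lemmas; the only point to keep in mind is that $\xto[i,s]{*}$ consists purely of contraction steps, so both preservation lemmas are used in the forward direction, and the fact that Lemma~\ref{lem_eqv_i} is stated for the full symmetric closure $\eqv{i}{}$ is more than enough to cover the $i$-contractions appearing in the reduction.
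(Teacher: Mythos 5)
Your proposal is correct and follows exactly the paper's own route: the paper's proof is simply ``Follows directly from Lemma~\ref{lem_erase_contr}, Lemma~\ref{lem_eqv_i} and Lemma~\ref{lem_contr_s}'', and you have merely spelled out the straightforward induction on the length of the $\xto[i,s]{*}$ reduction that those citations implicitly rely on. No gaps.
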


\begin{proof}
  Follows from Lemma~\ref{lem_erase_contr}, Lemma~\ref{lem_eqv_i} and
  Lemma~\ref{lem_contr_s}.
\end{proof}

With the above corollary we have finished the first half of the
proof. Now we need to show an analogous corollary for
$\CLCz$-expansions. First, we want to prove that $\CLCz$-expansions in
unlabeled terms may be simulated by $i$-expansions and $a$-expansions
in their strongly standard labeled variants. We have already shown in
Lemma~\ref{lem_eqv_i} that $i$-expansions preserve~$\Da_{\Fs_1}$. We
need to show that $a$-expansions also preserve~$\Da_{\Fs_1}$.

One trivial reason why $s$-expansions do not necessarily
preserve~$\Da_{\Fs_1}$ is that if $t \leftidx{{}_s}{\from} t'$
then~$t'$ may be not standard even if~$t$ is, e.g., consider
$\Fs_1 \leftidx{{}_s}{\from} \Ks_1 \Fs_1 (\Cs \Ts_1)$. A more profound
reason is that with $s$-expansion we do not sufficiently ``control''
the expansion by a rule for~$\Cs_2$.
E.g.~$\Fs_1 \leftidx{{}_s}{\from} \Cs_2 \Omega \Fs_1 (\Ks \Fs
\Omega)$. Then
$\Cs_2 \Omega \Fs_1 (\Ks \Fs \Omega) \to_s \Ks \Fs \Omega$ but
$\Ks \Fs \Omega$ does not $s$-reduce to~$\Fs_1$.

Hence, we use $a$-expansions which put additional restrictions on the
$s$-redexes, essentially implementing the labeling of expansions
described in Section~\ref{sec_overview}. They also allow to ``delay''
the reductions in a contractum of~$\Cs_2 t_0 t_1 t_1$ to facilitate
the proof of an analogon of Lemma~\ref{lem_i_postpone}.

Like in the proof of Lemma~\ref{lem_eqv_i} we show that if
$t' \to_a t$ then any reduction $t' \to_s^{*} s'$ may be simulated by
a reduction $t \to_s^{*} s$ with $s' \to_a^{\equiv} s$. The most
interesting case is when
$t' \equiv E[\Cs_2 t_0 t_1 t_1] \to_a E[t_1] \equiv t$ (where~$E$ is a
context), which is obtained from a $\CLCz$-expansion by the rule
$\Cs x y y \to y$. We now informally describe the idea for the proof
in this case. Thus suppose $t' \to_s^{*} s'$. If a contracted
$s$-redex does not overlap with a descendant\footnote{Note that
  because the rules of significant reduction are linear there may be
  at most one descendant.} of~$\Cs_2 t_0 t_1 t_1$, then the
$s$-reduction is simulated by the same $s$-reduction. If a descendant
of~$\Cs_2 t_0 t_1 t_1$ occurs inside a contracted $s$-redex, but it is
different from this redex, then the descendant must occur below a
variable position of the $s$-redex, because there are no non-root
overlaps between the rules of significant reduction. Thus we may
simulate this $s$-reduction by the same $s$-reduction. If a contracted
$s$-redex occurs inside a descendant~$\Cs_2 t_0' t_1' t_2'$
of~$\Cs_2 t_0 t_1 t_1$, but it is different from this descendant, then
it must occur in~$t_0'$,~$t_1'$ or~$t_2'$. In this case we ignore the
$s$-contraction while at all times maintaining the invariant: if
$\Cs_2 t_0' t_1' t_2'$ is a descendant of~$\Cs_2 t_0 t_1 t_1$ then
$t_1 \to_s^{*} t_1'$ and $t_1 \to_s^{*} t_2'$, and the descendant
of~$t_1$ in the simulated reduction is always identical with~$t_1$,
i.e.~$t_1$ (the $a$-contractum of $\Cs_2 t_0 t_1 t_1$) is not changed
by the simulated $s$-reduction. Finally, if a
descendant~$\Cs_2 t_0' t_1' t_2'$ of~$\Cs_2 t_0 t_1 t_1$ is
$s$-contracted, then either $\Cs_2 t_0' t_1' t_2' \to_s t_1'$ or
$\Cs_2 t_0' t_1' t_2' \to_s t_2'$. In any case we can $s$-reduce~$t_1$
to~$t_1'$ or~$t_2'$. In other words, we defer the choice of the
simulated reduction path till the descendant of the $a$-redex is
actually contracted.

\begin{definition}\label{def_a_redex}
  An $l$-term~$t'$ is an \emph{$a$-redex} and~$t$ its
  \emph{$a$-contractum}, if~$t$ is an $s$-term and one of the
  following holds:
  \begin{itemize}
  \item $t' \equiv \Cs_1 \Ts_1 t q$ and~$q$ is an $i$-term,
  \item $t' \equiv \Cs_1 \Fs_1 q t$ and~$q$ is an $i$-term,
  \item $t' \equiv \Cs_2 q t_1 t_2$, $t \to_s^{*} t_1$, $t \to_s^{*}
    t_2$ and~$q$ is an $i$-term,
  \item $t' \equiv \Ks_1 t q$ and~$q$ is an $i$-term,
  \item $t' \equiv \Ss^{\vec{n}} t_0 \la s_1,\ldots,s_{k} \ra \la
    \vec{r}_0,\ldots,\vec{r}_k \ra$ where the conventions regarding
    vector notation are as in Definition~\ref{def_clc_s}, $\erase{s_i}
    =_\sCLC \erase{s_j}$ for $i,j=1,\ldots,k$, $\erase{r_{i,j}}
    =_\sCLC \erase{r_{i',j'}}$ for $i,i'=0,\ldots,k$,
    $j=1,\ldots,n_i$, $j'=1,\ldots,n_{i'}$, none of the~$s_{i}$
    or~$r_{i,j}$ is a tuple, and $t \equiv t_0 \la \vec{r}_0 \ra \la
    s_1 \la \vec{r}_1 \ra, \ldots, s_{k} \la \vec{r}_k \ra \ra$.
  \end{itemize}
  Because of the third point, an $a$-contractum of an $a$-redex is not
  unique. The notations $\to_a$, $\to_a^{*}$, $\to_{i,a}$, etc.~are
  used accordingly. Note that any $a$-redex is an $s$-redex.
\end{definition}

\begin{lemma}\label{lem_a_to_erased_equiv}
  If $t' \to_a t$ then $t' \to_s \cdot \leftidx{{}_s^*}{\from} t$, and
  hence $\erase{t'} =_\sCLC \erase{t}$.
\end{lemma}

The above simple lemma implies that the conditions in significant
reduction rules are stable under $a$-reduction and $a$-expansion. Note
that if $t' \to_a t$ then not necessarily $t' \to_s t$ because of the
third point in Definition~\ref{def_a_redex}.

\begin{lemma}\label{lem_erase_expand}
  If~$t$ is standard, $t \equiverased q$ and
  $q \leftidx{{}_{\sCLCz}}{\from} q'$ then there is~$t'$ with
  $t' \to_{i,a}^{*} t$ and $t' \equiverased q'$.
\end{lemma}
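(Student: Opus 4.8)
The plan is to argue by induction on the size of $\la t, q \ra$, in parallel with the proof of Lemma~\ref{lem_erase_contr}, but now constructing an $(i,a)$-expansion leading to $q$ rather than an $(i,s)$-reduction. I would dispose of the two structural cases first. If $t \equiv \la t_1, \ldots, t_n \ra$ is a tuple, then each $t_i \equiverased q$ and each $t_i$ is standard, so the inductive hypothesis applied to the same contraction $q' \to_{\CLC_0} q$ yields $t_i'$ with $t_i' \xto[i,a]{*} t_i$ and $t_i' \equiverased q'$; I take $t' \equiv \la t_1', \ldots, t_n' \ra$ and use that $\xto[i,a]{}$ is a congruence. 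If $t$ is not a tuple but the contracted redex lies at a non-root position of $q'$, then $q' \equiv q_a' q_b'$ is an application and exactly one argument is contracted, say $q_a' \to_{\CLC_0} q_a$ with $q_b' \equiv q_b$, so $q \equiv q_a q_b$. Since $t \equiverased q$ is not a tuple it is an application $t \equiv t_a t_b$ with $t_a \equiverased q_a$ and $t_b \equiverased q_b$; applying the inductive hypothesis to the smaller standard term $t_a$ gives $t_a'$, and I take $t' \equiv t_a' t_b$.

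This leaves the base case: $t$ is not a tuple and all of $q'$ is the contracted redex. I split on whether $t$ is labelled, which is forced by the requirement in Definition~\ref{def_a_redex} that an $a$-contractum be an $s$-term. If $t \equiv q$, then $t$ is the plain $i$-term $q$; since the contracted $\CLC_0$-redex $q'$ is also a $\CLC$-redex with the same contractum, I take $t' \equiv q'$ and note $q' \xto[i]{} q \equiv t$. Otherwise $t \not\equiv q$, so $t$ is not an $i$-term, and being neither an $i$-term nor a tuple it is an $s$-term by~\ref{std_i_or_s_or_tuple} of Definition~\ref{def_standard}. Then I build an $a$-redex $t'$ whose $a$-contractum is $t$ according to the shape of $q'$: for $q' \equiv C T q q_0$, $q' \equiv C F q_0 q$, $q' \equiv C q_0 q q$ and $q' \equiv K q q_0$ I take respectively $C_1 T_1 t q_0$, $C_1 F_1 q_0 t$, $C_2 q_0 t t$ and $K_1 t q_0$. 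In each case $q_0$, being a subterm of the unlabelled term $q'$, is an $i$-term, and $t$ is an $s$-term, so $t'$ is an $a$-redex contracting to $t$ by the matching clause of Definition~\ref{def_a_redex} (for $C_2 q_0 t t$ the two required reductions $t \xto[s]{*} t$ hold by reflexivity), and $\erase{t'} \equiv q'$.

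The main obstacle is the remaining case $q' \equiv S q_0 q_1 q_2 \to_{\CLC_0} q_0 q_2 (q_1 q_2) \equiv q$, where I must reconstruct from $t$ the tuple structure required by the $S$-clause of Definition~\ref{def_a_redex}. Since $q$ is the application $(q_0 q_2)(q_1 q_2)$ and $t$ is an $s$-term, $t \equiv A\,B$ with $A \equiverased q_0 q_2$ and $B \equiverased q_1 q_2$. By Lemma~\ref{lem_no_app_tuple}, $A$ is not a tuple, hence $A \equiv t_0\, r_0'$ with $t_0 \equiverased q_0$ and $r_0' \equiverased q_2$; writing $r_0' \equiv \la \vec{r}_0 \ra$ (a genuine tuple, or a singleton via $\la t \ra \equiv t$), Lemma~\ref{lem_no_app_tuple} and condition~\ref{std_no_nested_tuple} guarantee that $t_0$ and every $r_{0,j}$ are non-tuples. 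I treat $B$ the same way: if $B \equiv \la B_1, \ldots, B_k \ra$ is a tuple then each $B_j$ is a non-tuple (by~\ref{std_no_nested_tuple}) with $B_j \equiverased q_1 q_2$, splitting as $B_j \equiv s_j\, r_j'$ with $s_j \equiverased q_1$ and $r_j' \equiv \la \vec{r}_j \ra \equiverased q_2$, the pieces being non-tuples by Lemma~\ref{lem_no_app_tuple} and~\ref{std_no_nested_tuple}; if $B$ is not a tuple I set $k = 1$ and $B \equiv s_1 r_1'$. I then take
\[
  t' \equiv S^{\vec{n}} \, t_0 \, \la s_1, \ldots, s_k \ra \, \la \vec{r}_0, \ldots, \vec{r}_k \ra .
\]
All erasure conditions of Definition~\ref{def_a_redex} are immediate, since $\erase{s_i} \equiv q_1$ for every $i$ and $\erase{r_{i,j}} \equiv q_2$ for every $i, j$, so the relevant erasures are literally equal and hence $=_\CLC$-equal; the non-tuple requirements hold by the observations above; and the $a$-contractum of $t'$ is exactly $t_0 \la \vec{r}_0 \ra \la s_1 \la \vec{r}_1 \ra, \ldots, s_k \la \vec{r}_k \ra \ra \equiv A\,B \equiv t$. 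Finally $\erase{t'} \equiv S q_0 q_1 q_2 \equiv q'$, so $t' \xto[a]{} t$ and $t' \equiverased q'$. Throughout, the only delicate points are the bookkeeping ones, namely checking that the pieces extracted from $t$ are non-tuples and that their erasures collapse to $q_0, q_1, q_2$, and both follow from the standardness of $t$ and its subterms.
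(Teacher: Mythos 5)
Your proof is correct and follows essentially the same route as the paper: induction on the size of $\la t,q\ra$, dispatching the tuple and non-root cases structurally, taking $t'\equiv q'$ when $t$ is an $i$-term, and otherwise building the appropriate $a$-redex ($C_1 T_1\, t\, q_0$, $C_1 F_1\, q_0\, t$, $C_2\, q_0\, t\, t$, $K_1\, t\, q_0$, or the $S^{\vec n}$ term reassembled from the tuple decomposition of $t$). The $S$ case matches the paper's treatment up to the order in which you peel apart the application and tuple structure, and the non-tuple and erasure checks you invoke (Lemma~\ref{lem_no_app_tuple} and condition~\ref{std_no_nested_tuple}) are exactly the ones the paper uses.
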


\begin{proof}
  Induction on the size of~$t$. First assume~$t$ is not a tuple
  and~$q$ is the $\CLCz$-contractum expanded in
  $q \leftidx{{}_{\sCLCz}}{\from} q'$. If~$t$ is an $i$-term, then
  $t \equiv q \leftidx{{}_i}{\from} q'$ and we may take
  $t' \equiv q'$. If~$t$ is not an $i$-term, then it is an $s$-term
  by~\ref{std_i_or_s_or_tuple} in Definition~\ref{def_standard}. We
  have the following possibilities, depending on the rule of~$\CLCz$
  used in the expansion.
  \begin{itemize}
  \item If $q' \equiv \Cs \Ts q_1 q_2 \to_{\sCLCz} q_1 \equiv q$ then
    we take $t' \equiv \Cs_1 \Ts_1 t q_2$ and we have $t' \to_a t$ and
    $t' \equiverased q'$.
  \item If $q' \equiv \Cs \Fs q_1 q_2 \to_{\sCLCz} q_2 \equiv q$ then
    we may take $t' \equiv \Cs_1 \Fs_1 q_1 t$.
  \item If $q' \equiv \Cs q_0 q_1 q_1 \to_{\sCLCz} q_1$ then we may
    take $t' \equiv \Cs_2 q_0 t t$.
  \item If $q' \equiv \Ks q_0 q_1 \to_{\sCLCz} q_0$ then we may take
    $t' \equiv \Ks_1 t q_1$.
  \item If $q' \equiv \Ss q_0 q_1 q_2 \to_{\sCLCz} q_0 q_2 (q_1 q_2)$
    then $t \equiverased q_0 q_2 (q_1 q_2)$ and~$t$ is an
    $s$-term. Hence $t \equiv t_a t_b t_c$ with $t_a \equiverased
    q_0$, $t_b \equiverased q_2$ and $t_c \equiverased q_1 q_2$.
    Recalling the convention $\la s \ra \equiv s$ for any term~$s$, we
    may assume
    \begin{itemize}
    \item[($\star$)] $t_b \equiv \la s_1,\ldots,s_m \ra$, $t_c \equiv
      \la t_1,\ldots,t_k \ra$, for $k,m \in \Nbb_+$, if $k=1$
      then~$t_1$ is not a tuple, and if $m=1$ then~$s_1$ is not a
      tuple.
    \end{itemize}
    In other words, if e.g.~$t_b$ is a tuple, then
    $t_b \equiv \la s_1,\ldots,s_m \ra$ for some
    $s_1,\ldots,s_m$. If~$t_b$ is not a tuple then we take
    $s_1 \equiv t_b$ and consider
    $t_b \equiv \la t_b \ra \equiv \la s_1 \ra$. This is chiefly to
    reduce the number of cases to consider. Let $1 \le i \le
    k$. Because~$t_b \equiverased q_2$, we have~$s_i \equiverased q_2$
    for $i=1,\ldots,m$. Also none of $s_1,\ldots,s_m$ is a tuple, by
    condition~\ref{std_no_nested_tuple} in
    Definition~\ref{def_standard}, or by~($\star$) if $m=1$. Since
    $t_c \equiverased q_1 q_2$, we have $t_i \equiverased q_1
    q_2$. Also~$t_i$ cannot be a tuple, by
    condition~\ref{std_no_nested_tuple} in
    Definition~\ref{def_standard}, or by~($\star$) if $k=1$. Thus
    $t_i \equiv u_i \la \vec{r}_i \ra$ where $\vec{r}_i$ stands for
    $r_{i,1},\ldots,r_{i,n_i}$, and $u_i \equiverased q_1$ and
    $r_{i,j} \equiverased q_2$ for $j=1,\ldots,n_i$, where none of
    the~$r_{i,j}$ is a tuple, by definition (if $n_i = 1$) or by
    condition~\ref{std_no_nested_tuple} in
    Definition~\ref{def_standard}. By
    Lemma~\ref{lem_standard_properties} also none of $u_1,\ldots,u_k$
    is a tuple. We may thus take
    $t' \equiv \Ss^{m,n_1,\ldots,n_k} t_a \la u_1,\ldots,u_k \ra \la
    \vec{r_0},\vec{r}_1,\ldots,\vec{r}_k\ra$ where $\vec{r}_0$ stands
    for $s_1,\ldots,s_m$. We have $t' \to_a t$ and
    $t' \equiverased q'$.
  \end{itemize}
  If~$t$ is a tuple or~$q$ is not the $\CLCz$-contractum, then the
  claim follows from the inductive hypothesis.
\end{proof}

\begin{lemma}\label{lem_s_a_commute}
  If $t \leftidx{{}_a}{\from} \cdot \to_s t'$ and $t$ is standard then
  $t \reduces_s \cdot \leftidx{{}_a^\equiv}{\from} t'$.
\end{lemma}

\begin{proof}
  Suppose $t' \to_s t_1'$, $t' \to_a t$ and~$t$ is standard. By
  induction on the definition of $t' \to_s t_1'$ we show that there
  is~$t_1$ with $t \reduces_s t_1$ and $t_1' \to_a^\equiv t_1$. The
  base case is when the $s$-contraction in $t' \to_s t_1'$ occurs at
  the root.

  If the $s$-contraction occurs at the root, but the $a$-contraction
  in $t' \to_a t$ does not occur at the root, then it is easy to see
  by inspecting the definitions that the $a$-redex in~$t_1'$ must
  occur below a variable position of the $s$-redex. Since significant
  reduction rules are linear and their conditions are stable under
  $a$-reduction, the claim holds in this case.

  Assume that both the $s$-contraction and the $a$-contraction occur
  at the root. If $t' \equiv \Cs_2 q s_1 s_2 \to_a t$ then
  $t \to_s^{*} s_1$, $t \to_s^{*} s_2$ and the $s$-contraction of~$t'$
  yields either~$s_1$ or~$s_2$. We may thus take either
  $t_1 \equiv s_1$ or $t_1 \equiv s_2$, and we have
  $t \to_s^{*} t_1 \equiv t_1'$. If
  $t' \equiv \Cs_1 \Ts_1 t q \to_a t$ then the $s$-contraction must be
  by the first rule of~$\CLCs$, so $t_1' \equiv t$ and we may take
  $t_1 \equiv t_1' \equiv t$. All other cases are analogous.

  If neither the $s$-contraction nor the $a$-contraction occurs at the
  root, then the claim is easily established, possibly appealing to
  the inductive hypothesis.

  Finally, assume that the $a$-contraction occurs at the root, but the
  $s$-contraction does not occur at the root. We have the following
  possibilities.
  \begin{itemize}
  \item If $t' \equiv \Cs_1 \Ts_1 t q \to_a t$ then the
    $s$-contraction must occur inside~$t$. So~$t \to_s t_1$ for
    some term~$t_1$. Note that~$t$ is an $s$-term by definition of
    $a$-contraction. Therefore~$t_1$ is also an $s$-term,
    by~\ref{std_s_term_reduce} in
    Definition~\ref{def_standard}. Thus~$t_1$ satisfies the required
    conditions.
  \item If $t' \equiv \Cs_2 q s_1 s_2 \to_a t$ then $t \to_s^{*} s_1$,
    $t \to_s^{*} s_2$ and the $s$-contraction must occur inside~$s_1$
    or~$s_2$. We may take $t_1 \equiv t$ and we still have
    $t_1' \to_a t_1$.
  \item The cases $t' \equiv \Cs_1 \Fs_1 q t \to_a t$ and
    $t' \equiv K_1 t q \to_a t$ are analogous to the first case.
  \item If
    $t' \equiv \Ss^{\vec{n}} t_0 \la s_1,\ldots,s_{k} \ra \la
    \vec{r}_0,\ldots,\vec{r}_k \ra$ then
    $\erase{s_i} =_\sCLC \erase{s_j}$,
    $\erase{r_{i,j}} =_\sCLC \erase{r_{i',j'}}$ for $i,j,i',j'$ as in
    Definition~\ref{def_a_redex}, none of the $s_i$ or $r_{i,j}$ is a
    tuple, and
    $t \equiv t_0 \la \vec{r}_0 \ra \la s_1 \la \vec{r}_1 \ra, \ldots,
    s_{k} \la \vec{r}_k \ra \ra$. The $s$-contraction must occur
    inside one of the~$s_i$ or the~$r_{i,j}$, or in~$t_0$. For
    instance, assume $s_1 \to_s s_1'$. Since~$s_1$ is a subterm
    of~$t$ and it is not a tuple, it cannot $s$-reduce to a tuple by
    Definition~\ref{def_standard}. Hence~$s_1'$ is not a tuple. Take
    $t_1 \equiv t_0 \la \vec{r}_0 \ra \la s_1' \la \vec{r}_1 \ra, s_2
    \la \vec{r}_2 \ra, \ldots, s_{k} \la \vec{r}_k \ra \ra$. Note that
    $t \to_s t_1$. Thus~$t_1$ is an $s$-term, because~$t$ is an
    $s$-term and it $s$-reduces only to $s$-terms,
    by~\ref{std_s_term_reduce} in Definition~\ref{def_standard}.
  \end{itemize}
\end{proof}

\begin{corollary}\label{cor_s_a_commute}
  If $t \leftidx{{}_a}{\from} \cdot \reduces_s t'$ and~$t$ is strongly
  standard then $t \reduces_s \cdot \leftidx{{}_a^\equiv}{\from} t'$.
\end{corollary}

\begin{lemma}\label{lem_properly_in_a_redex}
  If $r$ is a strongly standard $a$-contractum of an $a$-redex~$r'$,
  and~$s'$ is a proper subterm of~$r'$, then~$s'$ is standard.
\end{lemma}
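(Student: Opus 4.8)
The plan is to reduce the statement to a direct check of the five conditions of Definition~\ref{def_standard}. Recall that $s'\da$ asserts that every subterm of $s'$ satisfies conditions~\ref{std_i_or_s_or_tuple}--\ref{std_no_nested_tuple}; since every subterm of a proper subterm of $r'$ is again a proper subterm of $r'$, it suffices to verify that \emph{every} proper subterm of $r'$ satisfies these five conditions. Before the case analysis I would dispose of the bulk of the subterms by three standing remarks: every labelled constant and every $i$-term satisfies all five conditions; every subterm of $r$ is standard, because $r$ is strongly standard and strong standardness passes to subterms (take the reflexive reduction in the basic lemma that every subterm of a term to which a strongly standard term reduces is strongly standard); and, in the third clause of Definition~\ref{def_a_redex}, the terms $t_1,t_2$ with $r\xto[s]{*}t_1$ and $r\xto[s]{*}t_2$ are themselves strongly standard by the same basic lemma, hence standard. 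The remaining, ``new'' proper subterms are then handled by a case analysis on the form of the $a$-redex~$r'$.

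A single uniform observation disposes of the partially applied ``spine'' subterms that appear in every case, namely $C_2$, $C_2q$, $C_2qt_1$ in the third clause, $K_1$, $K_1 r$ in the fourth, and $S^{\vec n}$, $S^{\vec n}t_0$, $S^{\vec n}t_0\la s_1,\ldots,s_k\ra$ in the last (and the analogous $C_1$-spines in the first two). Each such term applies its leftmost labelled constant to strictly fewer arguments than the corresponding rule of~$\CLC_s$ requires, so it has no root $s$-redex; an internal $s$-reduction leaves the top application spine, and in particular the labelled leftmost constant, unchanged, so such a term can only $s$-reduce to $s$-terms. Hence condition~\ref{std_s_term_reduce} holds automatically for the spine terms, while conditions~\ref{std_c_1} and~\ref{std_s} are vacuous for them since those require three arguments. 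Together with the standing remarks this already settles the $C_1$-, $C_2$- and $K_1$-clauses, whose only other proper subterms are subterms of~$r$, the $i$-term argument, or (in the $C_2$-clause) the reducts $t_1,t_2$.

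The main work is the last clause, where $r'\equiv S^{\vec n}t_0\la s_1,\ldots,s_k\ra\la\vec r_0,\ldots,\vec r_k\ra$ and $r\equiv t_0\la\vec r_0\ra\la s_1\la\vec r_1\ra,\ldots,s_k\la\vec r_k\ra\ra$. The two tuples $\la s_1,\ldots,s_k\ra$ and $\la\vec r_0,\ldots,\vec r_k\ra$ do not occur in~$r$, so I would verify their standardness by hand. Each $s_i$ is the head of the subterm $s_i\la\vec r_i\ra$ of~$r$, and each $r_{i,j}$ occurs in~$r$; hence all the $s_i$ and $r_{i,j}$ are standard by the standing remark. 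Moreover Definition~\ref{def_a_redex} explicitly stipulates that none of the $s_i$ and none of the $r_{i,j}$ is a tuple, which is exactly what condition~\ref{std_no_nested_tuple} demands of both tuples (and when $k=1$ the first object is simply $s_1$ by the singleton convention $\la t\ra\equiv t$, so no tuple condition arises there at all). Conditions~\ref{std_i_or_s_or_tuple} and~\ref{std_s_term_reduce} are immediate for a tuple, conditions~\ref{std_c_1} and~\ref{std_s} are vacuous, and the $S^{\vec n}$-spines are covered by the uniform observation, using that $t_0$ is the $s$-head, hence a subterm, of~$r$.

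I expect the only genuine obstacle to be the bookkeeping in this last clause: one must keep careful track of the singleton convention $\la t\ra\equiv t$ so that the cases $k=1$ or $n_i=1$ neither create nor destroy tuples spuriously, and one must correctly identify the constituents of the two new tuples as subterms of the strongly standard term~$r$ rather than of~$r'$. No overlap, confluence, or termination argument is needed; the proof is entirely a matter of reading off the conditions of Definition~\ref{def_standard} from the shape of~$r'$ prescribed by Definition~\ref{def_a_redex}, together with the two elementary facts that strong standardness passes to subterms and to subterms of reducts.
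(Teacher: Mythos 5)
Your proof is correct and follows essentially the same route as the paper's: reduce everything to the facts that subterms of standard terms are standard and that strong standardness passes to subterms of reducts (covering $r$, the $i$-term arguments, the $t_1,t_2$ in the $C_2$ clause, and the $s_i$, $r_{i,j}$ in the $S$ clause), with the only genuine work being the two fresh tuples in the $S$ case. You are in fact slightly more explicit than the paper, which silently passes over the partially applied spine subterms and re-derives ``$s_i$ is not a tuple'' via Lemma~\ref{lem_no_app_tuple} where you simply invoke the stipulation in Definition~\ref{def_a_redex}; these are presentational differences only.
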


\begin{proof}
  It suffices to show that~$s'$ is a subterm of some standard term.
  \begin{itemize}
  \item Suppose $r' \equiv \Cs_1 \Ts_1 r q \to_a r$ with~$q$ an
    $i$-term. Both~$\Cs_1 \Ts_1 r$ and~$q$ are standard and~$s'$ is a
    subterm of one of them. The cases $r' \equiv \Cs_1 \Fs_1 q r$ and
    $r' \equiv \Ks_1 r q$ are analogous.
  \item Suppose $r' \equiv \Cs_2 q r_1 r_2 \to_a r$ with~$q$ an
    $i$-term. Because $r \to_s^{*} r_i$ and~$r$ is strongly standard,
    $r_1,r_2$ are standard. Also~$q$ is an $i$-term. This implies that
    $\Cs_2 q r_1$ is also standard. Since~$s'$ occurs in~$\Cs_2 q r_1$ or
    $r_2$, it is standard.
  \item Suppose
    $r' \equiv \Ss^{\vec{n}} t_0 \la s_1,\ldots,s_k \ra \la
    \vec{r}_0,\ldots,\vec{r}_k \ra \to_a t_0 \la \vec{r}_0 \ra \la
    s_1 \la \vec{r}_1 \ra, \ldots, s_k \la \vec{r}_k \ra \ra$. The
    term~$t_0$ and each of~$s_i$ and~$r_{i,j}$ (with $i,j$ as in
    Definition~\ref{def_a_redex}) is standard. Note that none of~$s_i$
    or~$r_{i,j}$ is a tuple by Definition~\ref{def_a_redex}. Since
    each~$s_i$ is also standard, by inspecting
    Definition~\ref{def_standard} we may conclude
    that~$\la s_1,\ldots,s_k\ra$ is standard. Similarly
    $\la \vec{r}_0,\ldots,\vec{r}_k \ra$ is standard. Also
    $\Ss^{\vec{n}} t_0 \la s_1,\ldots,s_k \ra$ is standard. This
    implies that~$s'$ is standard, because it occurs in
    $\Ss^{\vec{n}} t_0 \la s_1,\ldots,s_k \ra$ or
    $\la \vec{r}_0,\ldots,\vec{r}_k \ra$.
  \end{itemize}
\end{proof}

\begin{lemma}\label{lem_a_preserves_s_term}
  If $s$ is an $s$-term and $s' \to_a s$ then $s'$ is also an
  $s$-term.
\end{lemma}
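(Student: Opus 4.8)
The plan is to proceed by induction on the definition (derivation) of the $a$-contraction $s' \xto[a]{} s$, which is legitimate because $\xto[a]{}$ is defined inductively (Definition~\ref{def_a_redex}). The single guiding observation is that $a$-contraction never disturbs the leftmost constant of a term, and that every $a$-redex already begins with a labelled constant.

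For the base case the contraction occurs at the root, so $s'$ is itself an $a$-redex with $a$-contractum $s$. I would simply inspect the five forms of $a$-redex in Definition~\ref{def_a_redex}: in each case the leftmost symbol of $s'$ is one of the labelled constants $C_1$, $C_2$, $K_1$, or $S^{\vec{n}}$. Since an $l$-term whose leftmost constant is labelled is by definition an $s$-term, $s'$ is an $s$-term. Note that this part does not even use the hypothesis that $s$ is an $s$-term; it is automatic from the shape of the redex.

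For the inductive step the contraction does not occur at the root. Since $s$ is an $s$-term it is not a tuple, so the top-level congruence producing $s$ cannot be the tuple congruence (which relates tuples to tuples); hence $s$ is an application $s \equiv s_1 s_2$, and because $s$ is an $s$-term its function part $s_1$ is again an $s$-term. There are two subcases. If the redex lies in the function part, then $s' \equiv s_1' s_2$ with $s_1' \xto[a]{} s_1$; applying the induction hypothesis to the smaller derivation $s_1' \xto[a]{} s_1$, whose right-hand side $s_1$ is an $s$-term, gives that $s_1'$ is an $s$-term, whence $s' \equiv s_1' s_2$ is an $s$-term. If the redex lies in the argument, then $s' \equiv s_1 s_2'$ with $s_2' \xto[a]{} s_2$; here the function part $s_1$ is unchanged and still an $s$-term, so $s' \equiv s_1 s_2'$ is an $s$-term (using that $s_2'$ is an $l$-term, which holds since $a$-contraction relates $l$-terms).

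I expect no genuine obstacle: the statement is essentially a bookkeeping fact about the leftmost constant being preserved. The only point requiring a moment's care is excluding the tuple congruence at the top level, which is handled by the observation that an $s$-term is never a tuple, so $s'$ cannot be a tuple reducing to $s$.
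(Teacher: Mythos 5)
Your proof is correct and follows essentially the same route as the paper's: the paper also splits into the root case (where $s'$ is an $a$-redex and the claim is immediate from Definition~\ref{def_a_redex}) and the application case $s\equiv s_1 s_2$, applying the inductive hypothesis when the contraction lies in the function part; your choice of inducting on the derivation of $s'\xto[a]{} s$ rather than on the structure of $s$ is an immaterial difference. Your explicit exclusion of the tuple congruence (since an $s$-term is not a tuple) is a small point the paper leaves implicit.
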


\begin{proof}
  Induction on the structure of~$s$.
\end{proof}

\begin{lemma}\label{lem_expand_a_std}
  If~$t$ is strongly standard and $t' \to_a t$ then~$t'$ is standard.
\end{lemma}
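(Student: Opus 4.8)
The plan is to verify the five conditions of Definition~\ref{def_standard} at every subterm of $t'$, by induction on the definition of $\xto[a]{}$ (equivalently, on the structure of $t'$). Since $t$ is strongly standard, all its subterms are strongly standard and hence standard; consequently any subterm of $t'$ lying entirely outside the contracted $a$-redex is unchanged by the contraction, already occurs in~$t$, and is therefore standard. For the subterms lying strictly inside the $a$-redex occurrence I would invoke Lemma~\ref{lem_properly_in_a_redex}, noting that the $a$-contractum is a subterm of the strongly standard term~$t$ and hence strongly standard. For the subterms lying on the spine above the $a$-redex (other than $t'$ itself), each has the form $s'$ with $s' \xto[a]{} s$ for the corresponding strongly standard subterm $s$ of~$t$, so the induction hypothesis gives $s'\da$. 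This reduces the whole lemma to checking the five conditions for the single term $t'$.

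Conditions \ref{std_i_or_s_or_tuple}, \ref{std_c_1}, \ref{std_s} and \ref{std_no_nested_tuple} for $t'$ I would obtain by a short case analysis that transfers each property from~$t$, where it holds by $t\da$. When the $a$-contraction occurs at the root, $t'$ is an $a$-redex and these conditions can be read off directly from Definition~\ref{def_a_redex} (e.g.\ the two relevant arguments of an $S$-$a$-redex are tuples of the prescribed sizes, none of whose components is a tuple). When it occurs below the root, two elementary facts suffice. First, an $a$-redex is an $s$-redex, so any argument of $t'$ containing the contracted redex is not in $s$-NF; this makes the premise of condition~\ref{std_c_1} vacuous exactly when the contraction touches the guard of a~$C_1$, while in every other position the arguments governed by the condition are shared with~$t$. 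Second, $a$-contraction never turns a non-tuple into a tuple and never changes the size of a tuple (immediate from Definition~\ref{def_a_redex}, since each $a$-contractum is an $s$-term and the inductive clauses act componentwise); combined with conditions~\ref{std_s} and~\ref{std_no_nested_tuple} for~$t$ this yields them for~$t'$. Condition~\ref{std_i_or_s_or_tuple} is only nontrivial when $t'$ is an application: then $t'$ contains the $a$-contractum, hence a labelled constant, so by $t\da$ the (non-tuple, non-$i$-term) term $t$ is an $s$-term, whence its head is an $s$-term; the head of~$t'$ either coincides with it or $a$-contracts to it, so it is an $s$-term by Lemma~\ref{lem_a_preserves_s_term}, making~$t'$ an $s$-term.

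The main obstacle is condition~\ref{std_s_term_reduce}: that whenever $t'$ is an $s$-term, every $s$-reduct $t''$ of $t'$ is again an $s$-term. This cannot be read off from~$t$, since a single $s$-step of~$t'$ need not match a single $s$-step of~$t$ --- precisely the deferred-choice phenomenon that motivated the relation $\xto[a]{}$. I would argue by a sub-induction on the length of a reduction $t' \xto[s]{*} t''$, maintaining the invariant that $t'' \xto[a]{\equiv} t^\ast$ for some $s$-reduct $t^\ast$ of~$t$. In the inductive step I apply Lemma~\ref{lem_s_a_commute} to the last $s$-step together with the tracked $a$-relation: either the tracked term is reached by a genuine $a$-contraction and Lemma~\ref{lem_s_a_commute} produces the new $t^\ast$ (its hypothesis $t^\ast\da$ holding because $t^\ast$ is an $s$-reduct of the strongly standard~$t$), or the $a$-relation degenerates to $\equiv$ and $t''$ is itself an honest $s$-reduct of~$t$. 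Granting the invariant, $t''$ is an $s$-term: since $t$ is an $s$-term, so is its $s$-reduct $t^\ast$ by condition~\ref{std_s_term_reduce} for~$t$, and then $t'' \xto[a]{\equiv} t^\ast$ forces $t''$ to be an $s$-term by Lemma~\ref{lem_a_preserves_s_term}. The only extra ingredient is that $a$-contraction also preserves $s$-terms in the forward direction (so that $t$, and hence each tracked $t^\ast$, is indeed an $s$-term when $t'$ is); this is a routine structural induction dual to Lemma~\ref{lem_a_preserves_s_term}.
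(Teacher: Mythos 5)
Your proposal is correct and follows essentially the same route as the paper: the same three-way split of subterms (disjoint from the redex, properly inside it via Lemma~\ref{lem_properly_in_a_redex}, or containing it), the same transfer of conditions \ref{std_i_or_s_or_tuple}--\ref{std_s} and \ref{std_no_nested_tuple} from $t$ using the fact that $a$-contracta are $s$-terms together with Lemma~\ref{lem_a_preserves_s_term}, and the same treatment of condition~\ref{std_s_term_reduce} by iterating Lemma~\ref{lem_s_a_commute}. Your explicit sub-induction maintaining the invariant $t''\xto[a]{\equiv}t^{\ast}$ merely spells out what the paper leaves implicit when it applies the one-step commutation lemma to a multi-step reduction.
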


\begin{proof}
  We check that the conditions in Definition~\ref{def_standard} hold
  for every subterm~$s'$ of~$t'$. We may assume that~$s'$ does not
  occur in~$t$, as otherwise the claim follows from the fact that~$t$
  is standard. Therefore,~$s'$ occurs in the $a$-redex contracted in
  $t' \to_a t$, or the $a$-redex occurs inside~$s'$. If $s'$ is a
  proper subterm of the $a$-redex, then our claim holds by
  Lemma~\ref{lem_properly_in_a_redex}. Hence, we may assume that the
  $a$-redex~$r'$ is a subterm of~$s'$. Then $s' \to_a s$ with~$s$ a
  subterm of~$t$ (so~$s$ is strongly standard).
  \begin{enumerate}
  \item Suppose~$r$ is the $a$-contractum of~$r'$ and $s' \to_a
    s$. By Definition~\ref{def_a_redex}, the term~$r$ is an
    $s$-term. Thus~$s$ cannot be an $i$-term. If~$s$ is a tuple, then
    so is~$s'$. Otherwise,~$s$ is an $s$-term,
    by~\ref{std_i_or_s_or_tuple} in
    Definition~\ref{def_standard}. Hence $s'$ is also an $s$-term by
    Lemma~\ref{lem_a_preserves_s_term}.
  \item Suppose $s' \equiv \Cs_1 t_0' t_1' t_2'$ and~$t_0'$ is in
    $s$-NF. If $s' \equiv r'$ then $s' \equiv \Cs_1 \Ts_1 t_1' t_2'$
    or $s' \equiv \Cs_1 \Fs_1 t_1' t_2'$, hence $t_0' \equiv \Ts_1$ or
    $t_0' \equiv \Fs_1$. If~$r'$ is a proper subterm of~$s'$,
    then~$r'$ must be a subterm of~$t_1'$ or~$t_2'$, because
    $a$-redexes are not in $s$-NF. Thus,
    $s' \to_a s \equiv \Cs_1 t_0' t_1 t_2$ for some terms $t_1,t_2$,
    where~$s$ is a subterm of~$t$. Hence, $t_0' \equiv \Ts_1$ or
    $t_0' \equiv \Fs_1$ by~\ref{std_c_1} in
    Definition~\ref{def_standard}.
  \item Suppose $s' \equiv \Cs_2 t_0' t_1' t_2'$. If $s' \equiv r'$
    then $s \to_s^{*} t_1'$ and $s \to_s^{*} t_2'$. Hence
    $\erase{t_1'} =_\sCLC \erase{s} =_\sCLC \erase{t_2'}$. If
    $s' \not\equiv r'$ then $s \equiv \Cs_2 t_0 t_1 t_2$ with
    $t_i' \to_a^{\equiv} t_i$. Because~$s$ is standard,
    $\erase{t_1} =_\sCLC \erase{t_2}$ by~\ref{std_c_2} in
    Definition~\ref{def_standard}. Thus also
    $\erase{t_1'} =_\sCLC \erase{t_2'}$ by
    Lemma~\ref{lem_a_to_erased_equiv}.
  \item Suppose $s' \equiv \Ss^{n_0,\ldots,n_k} t_0' t_1' t_2'$. If
    $s' \equiv r'$, then
    $s' \equiv \Ss^{n_0,\ldots,n_k} t_0' \la s_1,\ldots,s_{k} \ra \la
    \vec{r}_0,\ldots,\vec{r}_k\ra$, as in
    Definition~\ref{def_a_redex}, so the claim holds. If~$r'$ is a
    proper subterm of~$s'$, then
    $s' \equiv \Ss^{n_0,\ldots,n_k} t_0' t_1' t_2' \to_a s \equiv
    \Ss^{n_0,\ldots,n_k} t_0 t_1 t_2$ where $t_i' \to_a^{\equiv} t_i$
    for $i=0,1,2$, and~$s$ is a subterm of~$t$. By~\ref{std_s} in
    Definition~\ref{def_standard}, the term~$t_2$ is a tuple of
    length~$\sum_{i=0}^kn_i$, and if $k > 1$ then~$t_1$ is a tuple of
    length~$k$. Since an $a$-contractum is an $s$-term, and hence not
    a tuple, $t_2$ is not an $a$-contractum, and if $k > 1$ then~$t_1$
    is not an $a$-contractum. Thus we may conclude that~$t_2'$ is a
    tuple of length~$\sum_{i=0}^kn_i$, and if $k > 1$ then~$t_1'$ is a
    tuple of length~$k$.
  \item Suppose $s'$ is an $s$-term and $s' \to_s^{*} s_1'$. Because
    also $s' \to_a s$ and $s$ is strongly standard, by
    Corollary~\ref{cor_s_a_commute} there is~$s_1$ with
    $s_1' \to_a^{\equiv} s_1$ and $s \to_s^{*} s_1$. By
    Definition~\ref{def_a_redex}, the term~$s$ is an $s$-term,
    so~$s_1$ is also an $s$-term by~\ref{std_s_term_reduce} in
    Definition~\ref{def_standard}. So~$s_1'$ is an $s$-term by
    Lemma~\ref{lem_a_preserves_s_term}.
  \item Suppose $s' \equiv \la t_1',\ldots,t_n' \ra$ with $n > 1$. We
    have $s' \to_a s \equiv \la t_1,\ldots,t_n \ra$ where $t_i'
    \to_a^{\equiv} t_i$ for
    $i=1,\ldots,n$. By~\ref{std_no_nested_tuple} in
    Definition~\ref{def_standard}, none of $t_1,\ldots,t_n$ is a
    tuple. Thus it is easy to see by inspecting
    Definition~\ref{def_a_redex} that none of~$t_1',\ldots,t_n'$ can
    be a tuple.
  \end{enumerate}
\end{proof}

\begin{lemma}\label{lem_expand_a}
  If $t \Da_{\Fs_1}$ and $t' \to_a t$ then $t' \Da_{\Fs_1}$.
\end{lemma}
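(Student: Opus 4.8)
The plan is to reduce Lemma~\ref{lem_expand_a} to an iterated version of Lemma~\ref{lem_s_a_commute}, in direct analogy with the way Lemma~\ref{lem_eqv_i} was obtained from Lemma~\ref{lem_i_postpone}. The key auxiliary statement I would establish is the following multi-step commutation: if $t$ is strongly standard, $t' \xto[a]{} t$, and $t' \xto[s]{*} s'$, then there is a term~$s$ with $s' \xto[a]{\equiv} s$ and $t \xto[s]{*} s$. Intuitively this projects any $s$-reduction starting from the $a$-redex~$t'$ onto an $s$-reduction starting from its $a$-contractum~$t$, the two endpoints remaining related by at most one $a$-step.

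I would prove this commutation by induction on the length of $t' \xto[s]{*} s'$. When the length is zero we take $s \equiv t$, using $t' \xto[a]{} t$ directly. Otherwise write $t' \xto[s]{} t_1' \xto[s]{*} s'$; since $t$ is in particular standard, Lemma~\ref{lem_s_a_commute} provides~$t_1$ with $t_1' \xto[a]{\equiv} t_1$ and $t \xto[s]{*} t_1$. If $t_1' \equiv t_1$ the tail $t_1' \xto[s]{*} s'$ already issues from~$t_1$, so $t \xto[s]{*} s'$ and we take $s \equiv s'$. If instead $t_1' \xto[a]{} t_1$, then $t_1$ is strongly standard (as $t$ is strongly standard and $t \xto[s]{*} t_1$), the tail is a strictly shorter $s$-reduction, and the inductive hypothesis applied to $t_1' \xto[a]{} t_1$ yields the desired~$s$; prepending $t \xto[s]{*} t_1$ completes the step.

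Given the commutation, both clauses defining $t' \leadsto F_1$ follow. For strong standardness, let $t' \xto[s]{*} s'$ and pick~$s$ with $s' \xto[a]{\equiv} s$ and $t \xto[s]{*} s$; as $t$ is strongly standard, $s$ is strongly standard, so $s\da$, and then either $s' \equiv s$ or $s' \xto[a]{} s$, the latter giving $s'\da$ by Lemma~\ref{lem_expand_a_std}. For the normal-form clause, let $t' \xto[s]{*} s'$ with $s'$ in $s$-NF and pick the corresponding~$s$; since every $a$-redex is an $s$-redex (Definition~\ref{def_a_redex}), a term in $s$-NF cannot $a$-contract, so $s' \xto[a]{\equiv} s$ forces $s' \equiv s$. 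Hence $t \xto[s]{*} s'$ with $s'$ in $s$-NF, and $t \leadsto F_1$ yields $s' \equiv F_1$.

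The real content is concentrated in the single-step commutation Lemma~\ref{lem_s_a_commute}, which is already available; consequently the only genuine obstacle here is correctly threading the strong standardness hypothesis through the iteration, ensuring in the case $t_1' \xto[a]{} t_1$ that the new $a$-contractum~$t_1$ is again strongly standard so the induction may proceed. This is exactly the informal invariant from the proof overview---that the $a$-contractum is never disturbed by the projected reduction---and it is discharged directly by the shape of the conclusion of Lemma~\ref{lem_s_a_commute}.
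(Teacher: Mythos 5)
Your proposal is correct and follows essentially the same route as the paper: the paper's proof also derives both clauses of $t' \leadsto F_1$ by projecting an arbitrary $s$-reduction from~$t'$ onto one from~$t$ via Lemma~\ref{lem_s_a_commute}, concluding with Lemma~\ref{lem_expand_a_std} for standardness and with the observation that an $a$-redex is an $s$-redex for the normal-form clause. The only difference is that the paper invokes Lemma~\ref{lem_s_a_commute} directly in its multi-step form, whereas you spell out the length induction and the threading of strong standardness needed to justify that iteration --- a detail the paper leaves implicit.
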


\begin{proof}
  Suppose $t' \to_s^{*} t_0'$. By Corollary~\ref{cor_s_a_commute}
  there is~$t_0$ with $t \to_s^{*} t_0$ and $t_0' \to_a^{\equiv}
  t_0$. Since~$t$ is strongly standard, so is~$t_0$. Therefore,~$t_0'$
  is standard by Lemma~\ref{lem_expand_a_std}.

  Suppose $t' \to_s^{*} t_0'$ with~$t_0'$ in $s$-NF. By
  Corollary~\ref{cor_s_a_commute} there is~$t_0$ with
  $t \to_s^{*} t_0$ and $t_0' \to_a^{\equiv} t_0$. Since an $a$-redex
  is an $s$-redex and~$t_0'$ is in $s$-NF, we conclude that
  $t_0' \equiv t_0$. But then $t_0' \equiv t_0 \equiv \Fs_1$, because
  $t \Da_{\Fs_1}$.
\end{proof}

\begin{corollary}\label{cor_expand}
  If $t \Da_{\Fs_1}$, $t \equiverased q$ and
  $q \leftidx{{}_{\sCLCz}}{\from} q'$ then there is~$t'$ with
  $t' \Da_{\Fs_1}$ and $t' \equiverased q'$.
\end{corollary}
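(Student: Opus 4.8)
The plan is to mirror the structure of the proof of Corollary~\ref{cor_contr}, assembling the result from the three ingredients already established for $\CLC_0$-expansions: the simulation lemma (Lemma~\ref{lem_erase_expand}) and the two preservation lemmas for the auxiliary relations, namely Lemma~\ref{lem_eqv_i} for $i$-expansion and Lemma~\ref{lem_expand_a} for $a$-expansion.

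First I would observe that the hypothesis $t \leadsto F_1$ gives, in particular, that $t$ is standard: strong standardness means that $t \xto[s]{*} t''$ implies $t''\da$, and taking the empty $s$-reduction yields $t\da$. Hence the hypotheses of Lemma~\ref{lem_erase_expand} are met, and we obtain a term~$t'$ with $t' \xto[i,a]{*} t$ and $t' \equiverased q'$. This already settles the erasure condition $t' \equiverased q'$, so it remains only to prove $t' \leadsto F_1$.

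For the remaining claim I would proceed by induction on the length~$m$ of the reduction $t' \xto[i,a]{*} t$, propagating $\leadsto F_1$ backwards. When $m = 0$ we have $t' \equiv t \leadsto F_1$. When $m > 0$, split off the first step as $t' \xto[i,a]{} t'' \xto[i,a]{*} t$; the inductive hypothesis applied to the shorter reduction gives $t'' \leadsto F_1$. If the first step is an $i$-contraction $t' \xto[i]{} t''$, then $t' \eqv{i}{} t''$ and Lemma~\ref{lem_eqv_i} yields $t' \leadsto F_1$; if instead it is an $a$-contraction $t' \xto[a]{} t''$, then Lemma~\ref{lem_expand_a} yields $t' \leadsto F_1$. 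In either case the inductive step closes, and we conclude $t' \leadsto F_1$.

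There is no genuine obstacle at this stage: all the difficulty has been absorbed into the preceding lemmas, in particular into Lemma~\ref{lem_s_a_commute} (which drives both preservation lemmas by letting an $a$-contraction commute past an $s$-contraction with only an $s$-reduction on the other side) and into the standardness bookkeeping of Lemma~\ref{lem_expand_a_std}. The only point requiring the slightest care is recognising that each single $i$-contraction falls under the symmetric relation $\eqv{i}{}$ for which Lemma~\ref{lem_eqv_i} is stated, so that the two preservation lemmas apply verbatim to each atomic step of the mixed $\xto[i,a]{*}$ sequence.
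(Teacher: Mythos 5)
Your proof is correct and takes essentially the same route as the paper, which simply states that the corollary follows from Lemmas~\ref{lem_erase_expand}, \ref{lem_eqv_i} and~\ref{lem_expand_a}; you have merely made explicit the backward induction along the $\xto[i,a]{*}$ sequence and the observation that $t \leadsto F_1$ yields the standardness hypothesis needed for Lemma~\ref{lem_erase_expand}.
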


\begin{proof}
  Follows from Lemma~\ref{lem_erase_expand}, Lemma~\ref{lem_eqv_i} and
  Lemma~\ref{lem_expand_a}.
\end{proof}

\begin{lemma}\label{lem_good_reduction}
  If~$t$ has no $s$-NFs other than~$\Fs_1$ then
  $t \to_{s{-}}^{*} \Fs_1$.
\end{lemma}

\begin{proof}
  Since $s$-reduction is terminating, by reducing $s$-redexes outside
  any tuples and not using the second rule for~$\Cs_2$ we will
  ultimately obtain a term~$t'$ with all $s$-redexes inside tuples,
  and such that $t \to_{s{-}}^{*} t'$. Note that an $s$-redex in~$t'$
  may only occur inside a tuple, because any $s$-redex by the second
  rule for~$\Cs_2$ is also an $s{-}$-redex by the first rule
  for~$\Cs_2$. If~$t'$ is in $s$-NF then $t' \equiv \Fs_1$. Otherwise,
  any $s$-NF of~$t'$ must contain a tuple, because $s$-reduction
  inside a tuple cannot erase this tuple or create an $s$-redex
  outside of it. But since any $s$-NF of~$t'$ is an $s$-NF of~$t$,
  this contradicts the fact that~$t$ has no $s$-NFs other
  than~$\Fs_1$.
\end{proof}

We now have everything we need to show the central lemma of the
confluence proof.

\begin{lemma}\label{lem_f_nf}
  The system~$\sCLC$ is $\Fs$-normal, i.e., if $q =_\sCLC \Fs$ then
  $q \reduces_\sCLC \Fs$.
\end{lemma}

\begin{proof}
  If $q =_\sCLC \Fs$ then by Lemma~\ref{lem_clc_equivalent} we have
  $q =_{\sCLCz} \Fs$. Note that $\Fs_1 \Da_{\Fs_1}$ and
  $\erase{\Fs_1} \equiv \Fs$. Thus, using Corollary~\ref{cor_contr}
  and Corollary~\ref{cor_expand} it is easy to show by induction on
  the length of $q =_{\sCLCz} \Fs$ that there is~$t$ with
  $t \equiverased q$ and $t \Da_{\Fs_1}$. By
  Lemma~\ref{lem_good_reduction} we have $t \to_{s{-}}^{*} \Fs_1$. But
  then, because $t \equiverased q$, using
  Lemma~\ref{lem_s_minus_erase} it is easy to show by induction on the
  length of $t \to_{s{-}}^{*} \Fs_1$ that
  $q \equiv \erase{t} \reduces_\sCLC \erase{\Fs_1} \equiv \Fs$.
\end{proof}

It remains to derive the confluence of~$\CLC$ and~$\CLCp$ from
Lemma~\ref{lem_f_nf}. We use a trick with an auxiliary term rewriting
system~$R$, in a way similar to how the confluence of~$\CLCp$ is
derived from the condition~$\Ts \ne_{\sCLCp} \Fs$
in~\cite{Vrijer1999}. The idea is to eliminate the non-trivial overlap
between the rules of~$\CLC$ by imposing additional side conditions.

\begin{definition}
  The term rewriting system~$R$ is defined by the following rules:
  \[
  \begin{array}{rclcrcl}
    \Ks x y &\to& x &\quad\quad& \Cs \Ts x y &\to& x \\
    \Ss x y z &\to& x z (y z) &\quad\quad& \Cs z x y &\to& y \quad\From\quad z =_{\sCLC} \Fs \\
    &&&& \Cs z x y &\to& x \quad\From\quad z \ne_{\sCLC} \Fs \land x =_{\sCLC} y
  \end{array}
  \]
\end{definition}

\begin{lemma}\label{lem_clc_0_to_r}
  If $q \contr_{\sCLCz} q'$ then $q \contr_R q'$.
\end{lemma}

\begin{lemma}\label{lem_r_to_clc}
  If $q \contr_R q'$ then $q \reduces_\sCLC q'$.
\end{lemma}

\begin{proof}
  Follows from definitions and Lemma~\ref{lem_f_nf}.
\end{proof}

\begin{lemma}\label{lem_r_confluent}
  The system~$R$ is confluent.
\end{lemma}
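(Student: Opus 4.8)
The plan is to prove confluence of~$R$ by the Tait--Martin-L\"of method of parallel reduction; the only work beyond the pure combinatory-logic case is to control the conditions attached to the rules for~$C$.

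First I would record the consequence of Lemma~\ref{lem_f_nf} that $T \ne_\CLC F$: if $T =_\CLC F$ held, then $T \reduces_\CLC F$ by Lemma~\ref{lem_f_nf}, which is impossible since~$T$ is a $\CLC$-normal form and $T \not\equiv F$. Next, from Lemma~\ref{lem_r_implies_clc} we have $\contr_R \subseteq \reduces_\CLC$, and hence $q \reduces_R q'$ implies $q =_\CLC q'$. This is the crucial stability fact: along any $R$-reduction the $\CLC$-equalities and inequalities occurring in the conditions of the rules of~$R$ are preserved, so a condition that holds for a term also holds for every $R$-reduct of it.

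I would then define a parallel reduction $\pcontr_R$ inductively in the usual way: every variable and constant parallel-reduces to itself; if $s \pcontr_R s'$ and $t \pcontr_R t'$ then $s t \pcontr_R s' t'$; and a root $C$-, $K$-, or $S$-redex may be contracted while its retained arguments are parallel-reduced, the appropriate condition ($z =_\CLC F$ for the second rule of~$C$, resp.~$z \ne_\CLC F \wedge x =_\CLC y$ for the third) being imposed on the redex. One checks the standard inclusions $\contr_R \subseteq \pcontr_R \subseteq \reduces_R$, whence $\reduces_R$ and $\preduces_R$ have the same reflexive--transitive closure, and it suffices to prove that $\pcontr_R$ has the diamond property: if $a \pcontr_R b$ and $a \pcontr_R c$ then $b \pcontr_R d$ and $c \pcontr_R d$ for some~$d$.

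The diamond property I would prove by induction on the structure of~$a$, using that the rules of~$R$ are left-linear and that their left-hand sides overlap only at the root. The $K$- and $S$-cases are exactly as in combinatory logic, and every case in which at most one of the two steps contracts the root redex is routine using the induction hypothesis on the subterms together with the stability fact, which lets a condition valid at~$a$ be reinstated at the root of the term obtained by reducing only the arguments. The main obstacle, and the only place where the structure of~$R$ really matters, is when both steps contract the root redex of some $a \equiv C z x y$, possibly by different rules. Here I would check that the rules are pairwise consistent at such an overlap: the first and second rules cannot both apply, since that would require $T =_\CLC F$, contradicting the fact established above; the second and third rules are mutually exclusive through the conditions $z =_\CLC F$ and $z \ne_\CLC F$; and in the remaining overlap (the first rule together with the third) both rules return a parallel reduct of~$x$, so a common reduct is furnished by the induction hypothesis on~$x$. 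Assembling these subcases completes the diamond property, and confluence of~$R$ follows.
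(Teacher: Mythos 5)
Your proof is correct and rests on exactly the same two facts as the paper's: $T \ne_\CLC F$ (a consequence of Lemma~\ref{lem_f_nf}), which together with left-linearity makes all critical pairs of~$R$ trivial, and stability of the rule conditions under $R$-reduction (via Lemma~\ref{lem_r_implies_clc}). The only difference is that the paper then simply cites the general theorem that weakly orthogonal conditional term rewriting systems with reduction-stable conditions are confluent, whereas you prove that instance by hand with a Tait--Martin-L\"of parallel-reduction diamond argument; your case analysis of the root overlaps between the three $C$-rules is exactly the verification that the critical pairs are trivial.
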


\begin{proof}
  Because $\Ts \ne_\sCLC \Fs$ by Lemma~\ref{lem_f_nf}, the system~$R$
  is weakly orthogonal (i.e.~it is left-linear and all its critical
  pairs are trivial). By Lemma~\ref{lem_r_to_clc} the conditions are
  stable under reduction. Weakly orthogonal conditional term rewriting
  systems whose conditions are stable under reduction are
  confluent~\cite[Chapter 4]{Terese2003}.
\end{proof}

\begin{theorem}
  The systems~$\CLC$ and~$\CLCp$ are confluent.
\end{theorem}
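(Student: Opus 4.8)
The plan is to deduce the confluence of both systems from the confluence of the auxiliary system~$R$ (Lemma~\ref{lem_r_confluent}), using~$R$ as a bridge that lies conversionally between~$\CLC_0$ and~$\CLC$. The whole difficulty has already been absorbed into Lemma~\ref{lem_f_nf}, on which Lemma~\ref{lem_r_confluent} rests; what remains is the routine trick, analogous to the one in~\cite{Vrijer1999}, of transporting a conversion through a confluent system and then descending the resulting reduction back into~$\CLC$.

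First I would show that the conversion relations $=_\CLC$ and $=_R$ coincide. The inclusion of $=_R$ in $=_\CLC$ is immediate from Lemma~\ref{lem_r_implies_clc}, since every $R$-contraction is a $\CLC$-reduction and hence preserves $\CLC$-convertibility. For the reverse inclusion I would invoke Lemma~\ref{lem_clc_equivalent} to replace $=_\CLC$ by $=_{\CLC_0}$, and then use Lemma~\ref{lem_clc_0_implies_r} to observe that every $\CLC_0$-contraction is an $R$-contraction, so $=_{\CLC_0}$ is contained in $=_R$. Combining the two shows that $=_\CLC$ and $=_R$ are the same relation.

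Next I would establish confluence of~$\CLC$. Suppose $q \reduces_\CLC q_1$ and $q \reduces_\CLC q_2$; then $q_1 =_\CLC q_2$, hence $q_1 =_R q_2$ by the identification just proved. Since~$R$ is confluent, $q_1$ and $q_2$ possess a common $R$-reduct~$q_3$. Applying Lemma~\ref{lem_r_implies_clc} to each individual step of $q_1 \reduces_R q_3$ and $q_2 \reduces_R q_3$ turns these into $\CLC$-reductions $q_1 \reduces_\CLC q_3$ and $q_2 \reduces_\CLC q_3$, exhibiting~$q_3$ as a common $\CLC$-reduct.

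Finally, the same argument yields confluence of~$\CLC^+$. By Lemma~\ref{lem_clc_equivalent} the three conversion relations $=_{\CLC^+}$, $=_\CLC$ and $=_R$ all coincide, so a $\CLC^+$-peak again gives $q_1 =_R q_2$ and a common $R$-reduct~$q_3$; Lemma~\ref{lem_r_implies_clc} produces $\CLC$-reductions to~$q_3$, and since every rule of~$\CLC$ is also a rule of~$\CLC^+$ we have that $\reduces_\CLC$ is contained in $\reduces_{\CLC^+}$, so these are in particular $\CLC^+$-reductions. The only point requiring a little care throughout is the bidirectional identification of the conversion theories in the first step; I do not anticipate any genuine obstacle here, as the real work lies entirely in the already-proved Lemma~\ref{lem_f_nf} and Lemma~\ref{lem_r_confluent}.
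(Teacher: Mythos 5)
Your proposal is correct and follows essentially the same route as the paper: both pass from a $\CLC^+$- (or $\CLC$-) conversion to a $\CLC_0$-conversion via Lemma~\ref{lem_clc_equivalent}, then to an $R$-conversion via Lemma~\ref{lem_clc_0_implies_r}, apply confluence of~$R$ (Lemma~\ref{lem_r_confluent}), and project the resulting $R$-reductions back into~$\CLC$ via Lemma~\ref{lem_r_implies_clc}. The only cosmetic difference is that the paper handles both systems in one pass by proving the Church--Rosser form directly for $=_{\CLC^+}$, whereas you treat $\CLC$ and $\CLC^+$ separately.
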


\begin{proof}
  Since $q_1 \contr_{\sCLC} q_2$ implies $q_1 \contr_{\sCLCp} q_2$, it
  suffices to show that if $q_1 =_{\sCLCp} q_2$ then there is~$q$ with
  $q_1 \reduces_\sCLC q$ and $q_2 \reduces_\sCLC q$. So suppose $q_1
  =_{\sCLCp} q_2$. Then by Lemma~\ref{lem_clc_equivalent} we have $q_1
  =_{\sCLCz} q_2$. By Lemma~\ref{lem_clc_0_to_r} we obtain $q_1 =_R
  q_2$. By Lemma~\ref{lem_r_confluent} there is~$q$ with $q_1
  \reduces_R q$ and $q_2 \reduces_R q$. By Lemma~\ref{lem_r_to_clc} we
  have $q_1 \reduces_\sCLC q$ and $q_2 \reduces_\sCLC q$.
\end{proof}

\bibliography{biblio}{}

\end{document}